\numberwithin{equation}{section}
\titleformat*{\section}{\large\bfseries}
\newcommand{\CC}{\mathbb C}
\newcommand{\NN}{\mathbb N}
\newcommand{\RR}{\mathbb R}
\newcommand{\ZZ}{\mathbb Z}
\newcommand{\Hi}{\mathscr{H}}
\newtheorem{thm}{Theorem}[section]
\newtheorem{lemma}[thm]{Lemma}
\newtheorem{cor}[thm]{Corollary}
\newtheorem{rem}[thm]{Remark}
\newcommand{\dprod}[2]{\left\langle #1,#2\right\rangle}
\newcommand{\norm}[1]{\left\lVert #1\right\rVert}
\newcommand{\bkt}[1]{\left(#1\right)}
\title{Jak\v{s}i\'{c}-Last Theorem for Higher Rank Perturbations}
\author{Anish Mallick\thanks{anishm@imsc.res.in}}
\affil{{\small The Institute of Mathematical Sciences,\\Chennai-600113, India}}
\begin{document}
\maketitle
\begin{abstract}
We consider the generalized Anderson Model $\Delta+\sum_{n\in\mathcal{N}}\omega_n P_n$, where $\mathcal{N}$ is a countable set, $\{\omega_n\}_{n\in\mathcal{N}}$ are i.i.d random variables and $P_n$ are rank $N<\infty$ projections. For these models we prove theorem analogous to that of Jak\v{s}i\'{c}-Last on the equivalence of the trace measure $\sigma_n(\cdot)=tr(P_nE_{H^\omega}(\cdot)P_n)$ for $n\in\mathcal{N}$ a.e $\omega$. Our model covers the dimer and polymer models.
\end{abstract}

\section{Introduction}
In this paper we address the nature of spectral measure for generalized Anderson type models with single site potentials of higher rank or a constant randomness over several neighboring collection of sites . The basic setup of the problem is the following. We have a self adjoint operator $A$ on separable Hilbert space $\Hi$, and rank $N$ projections $\{P_n\}_{n\in\mathcal{N}}$ where $\mathcal{N}$ is countable or a finite set. Given an absolutely continuous measure $\mu$ on $\RR$, we define the set of operators
\begin{equation}\label{OpEq1}
H^\omega=A+\sum_{n\in\mathcal{N}}\omega_n P_n
\end{equation}
for $\{\omega_n\}_{n\in\mathcal{N}}\in\mathbb{R}^\mathcal{N}$ distributed identically and independently following the distribution $\mu$. This defines a map from measure space $(\Omega,\mathscr{B},\mathbb{P})$ (product measure space $(\mathbb{R}^\mathcal{N},\mathscr{B}(\mathbb{R}^\mathcal{N}),\otimes\mu)$) to the set of essentially self adjoint linear operators on $\Hi$. We are interested in the spectral measure of this set of operators.

In the case of Anderson tight-binding model, we have the Hilbert space $l^2(\mathbb{Z}^d)$ on which we  have the operator $\Delta$ defined by
$$(\Delta u)(n)=\sum_{|n-m|=1}u(m)\qquad\forall u\in l^2(\ZZ^d),n\in\ZZ^d$$
and the collection of rank one projection $\{|\delta_n\rangle\langle\delta_n|\}_{n\in\mathbb{Z}^d}$. Prior works \cite{JL1,JL2,BS2,BS1} proved simplicity of spectrum for such models using the property that $\{|\delta_n\rangle\langle\delta_n|\}_{n\in\ZZ^d}$ are rank one.

Similar to the tight-binding model, we have the random Schr\"{o}dinger operators, defined by
$$(H^\omega f)(x)=-\sum_{i=1}^d\frac{\partial^2 f}{\partial x^2_i}(x)+\sum_{p\in\ZZ^d}\omega_pG(x-p)f(x)\qquad\forall x\in\RR^d,f\in C_c^\infty(\RR^d)$$
where $G$ is a compactly supported function on $[0,1]^d$. Simplicity of the singular spectrum for this model is still an open problem.

These models are also considered on graphs (for example Bethe lattices and one dimensional strips). In the case of Bethe lattice and Bethe strips absolute continuous spectrum was shown to exists \cite{FHS,K1,KS}. All these models show localization at high disorder \cite{AM} and so have pure point spectrum also.

Multiplicity of these spectra are not well understood for projection valued perturbation. Known results study rank one perturbation \cite{JL1,JL2} and cyclicity \cite{BS1,BS2}. In work by Naboko, Nichols and Stolz \cite{NNS}, such a problem is handled for pure point part of the spectrum. Sadel and Schulz-Baldes \cite{SSH} were looking at quasi-one-dimensional stochastic dirac operator, and provided conditions for singular and absolutely continuous spectrum in terms of size of fibres. They also proved that the multiplicity of ac spectrum is $2$. In this paper we prove results similar to those in \cite{JL1} and \cite{JL2}.

Before stating the main results, we introduce some notations. For $n\in\mathcal{N}$ and $\omega\in\Omega$ define $\Hi_{n,\omega}$ as the cyclic subspace generated by $H^\omega$ defined by \eqref{OpEq1} and the vector space $P_n\Hi$ (this vector space is isomorphic to $\CC^N$), and set $Q_n^\omega:\Hi\rightarrow\Hi_{n,\omega}$ as the canonical projection. Let $E_{H^\omega}$ be the spectral projection measure for the operator $H^\omega$; set $\Sigma^\omega_n(\cdot)=P_nE_{H^\omega}(\cdot)P_n$
(which is now a matrix valued measure) and set $\sigma_n^\omega(\cdot)=tr(\Sigma_n^\omega(\cdot))$ as the trace measure (these are finite measures). Let $P^\omega_{ac}$ be the orthogonal projection onto the absolutely continuous spectral subspace $\Hi_{ac}(H^\omega)$. For $n,m\in\mathcal{N}$, define
\begin{equation}\label{CorEq1}
E_{n,m}=\{\omega\in\Omega|\ Q_n^\omega P_m\text{has same rank as }P_m\}
\end{equation}
We will be working with the following set
$$\mathscr{M}=\{n\in\mathcal{N}|\ \sigma^\omega_n\text{ is not equivalent to Lebesgue measure for a.e }\omega\}$$
The reason for confining oneself in this set is a theorem of F. and M. Riesz \cite{RR}, which implies that \emph{the Borel transform of any complex measure which is zero in $\CC^{+}$ has to be absolutely continuous with respect to Lebesgue measure}.  But one can prove that the total variation measure need to be equivalent to Lebesgue measure, see \cite[Theorem 2.2]{JL2} for a proof. So confining to $\mathscr{M}$ we get that the Borel transform of  non-zero measure in $\mathscr{M}$ can never be identically zero in $\CC^{+}$, and so one can use results about boundary values of analytic functions like \cite{BR2,BR1}. We state the main theorem:

\begin{thm}\label{MainThm}
For any $N\in\NN$, let $\{P_n\}_{n\in\mathcal{N}}$ be collection of rank $N$ projections such that $\sum_{n\in\mathcal{N}}P_n=I$, and $\mu$ be a absolutely continuous measure on $\RR$. Let $\{H^\omega\}_{\omega\in\Omega}$ be a family of operator defined as in \eqref{OpEq1}, then
\begin{enumerate}
\item  For $n,m\in\mathscr{M}$ we have $\mathbb{P}(E_{n,m})\in\{0,1\}$.
\item Let $n,m\in\mathscr{M}$ such that $\mathbb{P}(E_{n,m}\cap E_{m,n})=1$. For a.e $\omega\in\Omega$, the restrictions $P^\omega_{ac}H^\omega|_{\Hi_{n,\omega}}$ and $P^\omega_{ac}H^\omega|_{\Hi_{m,\omega}}$ are unitary equivalent.
\item Let $n,m\in\mathscr{M}$ such that $\mathbb{P}(E_{n,m}\cap E_{m,n})=1$, for a.e $\omega\in\Omega$ the measures $\sigma^\omega_n$ and $\sigma^\omega_m$ are equivalent as Borel measures.
\end{enumerate}
\end{thm}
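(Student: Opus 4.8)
The plan is to recast every assertion in terms of boundary values of matrix-valued Borel transforms and then combine a zero-one law with a comparison of absolutely continuous multiplicity functions. For $n,m\in\mathcal{N}$ and $z\in\CC^+$ set
$$F_{nm}^\omega(z)=P_n(H^\omega-z)^{-1}P_m,$$
viewed as an $N\times N$ matrix under the identification $P_n\Hi\cong\CC^N$; the diagonal function $F_n^\omega:=F_{nn}^\omega$ is the Borel transform of the matrix measure $\Sigma_n^\omega$. Two facts drive the argument. First, by the matrix Fatou theorem the non-tangential limit $\mathrm{Im}\,F_n^\omega(x+i0)$ exists for Lebesgue-a.e.\ $x$ and the absolutely continuous part of $\Sigma_n^\omega$ equals $\tfrac1\pi\,\mathrm{Im}\,F_n^\omega(x+i0)\,dx$; consequently $P^\omega_{ac}H^\omega|_{\Hi_{n,\omega}}$ is unitarily equivalent to multiplication by $x$ on $\int_{\RR}^{\oplus}\CC^{k_n^\omega(x)}\,dx$, where $k_n^\omega(x):=\mathrm{rank}\,\mathrm{Im}\,F_n^\omega(x+i0)$ is the a.c.\ multiplicity function. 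Second, since $\Hi_{n,\omega}$ reduces $H^\omega$, the canonical projection $Q_n^\omega$ commutes with every bounded function of $H^\omega$ and thus serves as an intertwiner between the spectral representations attached to different sites.

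For part (1) I would show that $E_{n,m}$ agrees, up to a $\PP$-null set, with a tail event and then invoke Kolmogorov's zero-one law. Fix $k\in\mathcal{N}$ and change the single coupling $\omega_k$; since this alters $H^\omega$ only by the finite-rank operator $(\omega_k'-\omega_k)P_k$, a resolvent-identity computation shows that the rank of $Q_n^\omega P_m$ is unchanged for a.e.\ $\omega$ and a.e.\ new value of $\omega_k$. Iterating over a finite set $S$, membership in $E_{n,m}$ is a.s.\ determined by $\{\omega_j\}_{j\notin S}$ for every finite $S$, so $E_{n,m}$ lies in the tail $\sigma$-field; as the $\omega_j$ are independent, $\PP(E_{n,m})\in\{0,1\}$.

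For part (2) I assume $\PP(E_{n,m}\cap E_{m,n})=1$ and argue at a fixed typical $\omega$. The projected vectors $\{Q_n^\omega\psi:\psi\in P_m\Hi\}$ lie inside $\Hi_{n,\omega}$, and writing $\Sigma_m^\omega=\Sigma'+\Sigma''$ according to the orthogonal splitting $\psi=Q_n^\omega\psi+(1-Q_n^\omega)\psi$ (the cross terms vanish because $Q_n^\omega$ commutes with $E_{H^\omega}$) exhibits the matrix measure $\Sigma'$ of the projected family, with Borel transform $F'$, as dominated by $\Sigma_m^\omega$. The event $E_{n,m}$ forces $Q_n^\omega P_m$ to have rank $N$, and here the hypothesis $n,m\in\mathscr{M}$ enters: by the F.\ and M.\ Riesz theorem the relevant transforms have non-degenerate boundary values, so full rank of $Q_n^\omega P_m$ upgrades to $\mathrm{rank}\,\mathrm{Im}\,F'(x+i0)=k_m^\omega(x)$ a.e.; as $\Sigma'$ lives in $\Hi_{n,\omega}$ this yields $k_m^\omega(x)\le k_n^\omega(x)$ a.e. Running the symmetric argument under $E_{m,n}$ gives the reverse inequality, so $k_n^\omega=k_m^\omega$ a.e., and the direct-integral description forces the a.c.\ restrictions $P^\omega_{ac}H^\omega|_{\Hi_{n,\omega}}$ and $P^\omega_{ac}H^\omega|_{\Hi_{m,\omega}}$ to be unitarily equivalent.

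Part (3) I would split into a.c.\ and singular parts. The a.c.\ parts of $\sigma_n^\omega$ and $\sigma_m^\omega$ are equivalent by part (2), since unitarily equivalent a.c.\ restrictions have equal multiplicity functions and hence mutually absolutely continuous trace measures. For the singular parts I would use a Poltoratski-type theorem on non-tangential limits of ratios of Borel transforms: under $E_{n,m}\cap E_{m,n}$ the off-diagonal transform $F_{nm}^\omega(x+i0)$ has full rank both $\sigma_n^{\omega,\mathrm{sing}}$-a.e.\ and $\sigma_m^{\omega,\mathrm{sing}}$-a.e., which places the two singular measures on the same carrier and forces their equivalence; combining the parts gives $\sigma_n^\omega\sim\sigma_m^\omega$ as Borel measures. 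I expect the main obstacle to be the matrix-valued boundary analysis underlying parts (2) and (3): in the rank-one setting one compares scalar Borel transforms and a single non-vanishing Green's function, whereas here one must control the \emph{rank} of $\mathrm{Im}\,F(x+i0)$ and show that the purely algebraic full-rank condition defining $E_{n,m}$ genuinely transfers to full a.c.\ multiplicity and to full-rank singular boundary data. This is exactly where the confinement to $\mathscr{M}$, the F.\ and M.\ Riesz theorem (excluding identically vanishing boundary values), and the matrix boundary-value results of \cite{BR1,BR2} must be invoked with care.
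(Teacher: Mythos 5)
Your part (1) is essentially the paper's argument: membership in $E_{n,m}$ is translated (Lemma \ref{InvLem1} and the remark following it) into a.e.\ invertibility of $G^\omega_{nm}(z)=P_n(H^\omega-z)^{-1}P_m$, the resolvent identity \eqref{MHEq2} together with Corollary \ref{MbleVar2} shows this invertibility survives a change of any single coupling for a.e.\ new value (Lemma \ref{InvLem2}), and Kolmogorov's $0$-$1$ law finishes. Your sketch leaves the stability step as ``a resolvent-identity computation,'' but the route is the right one.

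Parts (2) and (3), however, contain a genuine gap. You claim that at a fixed typical $\omega$ the algebraic condition ``$Q_n^\omega P_m$ has rank $N$'' upgrades to the pointwise statement $\mathrm{rank}\,\Im F'(x+i0)=k_m^\omega(x)$ a.e., whence $k_m^\omega\le k_n^\omega$. This does not follow, and is false in general: writing $\psi=Q_n^\omega\psi+(1-Q_n^\omega)\psi$ for $\psi\in P_m\Hi$, the components $(1-Q_n^\omega)\psi$ can carry additional absolutely continuous multiplicity in $\Hi_{n,\omega}^\perp$, so one can have $Q_n^\omega P_m$ of full rank while $k_m^\omega(x)=2>1=k_n^\omega(x)$ on a set of positive measure. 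Full rank of $Q_n^\omega P_m$ only encodes non-orthogonality of cyclic subspaces, i.e.\ non-vanishing of certain scalar boundary values for Lebesgue-a.e.\ $E$; it gives no pointwise control on the rank of $\Im$ of the boundary matrices. The paper's mechanism is genuinely perturbative: from $G^\mu_{22}=G_{22}-\mu G_{21}G_{12}+\mu^2G_{21}G^\mu_{11}G_{12}$ and the Herglotz-matrix inequality \eqref{eq4} one shows that $(G_{12}(E+\iota 0))^{-1}$ maps $\ker(\Im G^\mu_{11}(E+\iota0))^\perp$ injectively into $\ker(\Im G^\mu_{22}(E+\iota0))^\perp$ (Lemma \ref{AbsPtSpec1}), and only by perturbing with \emph{two} independent couplings $\mu_1P_n+\mu_2P_m$ (Corollary \ref{AbsEqu1}) does one obtain injections in both directions, hence equality of multiplicities for a.e.\ $(\mu_1,\mu_2)$ and therefore for a.e.\ $\omega$. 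Nothing in your proposal produces the reverse inequality except the symmetric application of the invalid step.

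The same defect, in sharper form, affects your singular-part argument: invertibility of $G^\omega_{nm}(E+\iota0)$ is available only for Lebesgue-a.e.\ $E$, whereas $\sigma_n^{\omega,\mathrm{sing}}$ is carried by a Lebesgue-null set, so ``full rank $\sigma^{\mathrm{sing}}$-a.e.''\ cannot be extracted from Lemma \ref{InvLem1} at a fixed $\omega$. The paper circumvents this with spectral averaging (Lemma \ref{SpecAvaLem}): for a.e.\ coupling the perturbed singular measure assigns zero weight to the fixed null set where the unperturbed boundary values fail to exist or to be invertible; it then works with the renormalized limits $tr(G^\mu_{11}(E+\iota\epsilon))^{-1}G^\mu_{ii}(E+\iota\epsilon)$ and Poltoratskii's theorem (Lemma \ref{SingPtSpec1}, Corollary \ref{SingLem3}) to get mutual absolute continuity of the singular traces --- a common carrier alone, which is all your argument would deliver even if repaired, does not imply equivalence of measures.
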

\begin{rem} Two examples for which the condition $\mathbb{P}(E_{n,m})=1$ can be verified are:
\begin{enumerate}
 \item Consider $l^2(\ZZ)$ with the operator $H^\omega=\Delta+\sum_{n\in\ZZ}\omega_n P_n$ where $P_n=\sum_{k=0}^{N-1}\delta_{Nn+k}$, we have $\mathbb{P}(E_{n,m})=1$ for each $n,m\in\ZZ$. This is because for $n_1,n_2\in\ZZ$, $\dprod{\delta_{n_1}}{(H^\omega)^{|n_1-n_2|}\delta_{n_2}}=1$ and hence all cyclic subspaces intersect with each other non-trivially. If we considered $l^2(\ZZ^d)$ and some enumeration of its basis $\{\delta_{n_k}\}_{k\in\ZZ}$ and define $P_n$ as before, again we can prove $\mathbb{P}(E_{n,m})=1$.
\item Consider the Hilbert space $\oplus_{i=1}^N l^2(\ZZ)$, and $\Delta$ as adjacency operator on each space separately. Set $\pi_i:\ZZ\rightarrow\ZZ$ be surjective map for $i=1,\cdots,N$, and define $P_n=\sum_{i=1}^N \delta^i_{\pi_i(n)}$, where $\{\delta_n^i\}_{n\in\ZZ}$ is basis for each $l^2(\ZZ)$. Then for this case also $\mathbb{P}(E_{n,m})=1$.
\end{enumerate}
In case the measure $\mu$ has compact support on $\RR$ and $A$ is bounded, none of the $\sigma^\omega_n$ can have full support on $\mathbb{R}$, and so $\mathscr{M}=\mathcal{N}$ similar to the rank $1$  case of Jak\v{s}i\'{c} and Last \cite{JL2}.
\end{rem}
The approach to  gain information about the spectral measure is by using the matrix valued function:
$$P_n(H^\omega-z)^{-1}P_n:P_n\Hi\rightarrow P_n\Hi$$
for $z\in\CC^{+}$. Since we will be working with $n\in\mathscr{M}$, it is enough to look at the above matrices. These are termed Matrix valued Herglotz functions or Birman-Schwinger operators. Birman-Schwinger principle was developed for compact perturbations in \cite{BMS,SJ} and some notable applications can be found in \cite{CS,KM,BS3}.

We will be working with the above as Matrix valued Herglotz functions whose properties can be found in \cite{GT1}. By combining theorem \ref{FgetThm1} (see \cite[Theorem 5.4]{GT1} for proof) and \ref{SpecThm} we obtain conditions in terms of $\lim_{\epsilon\downarrow0}P_n(H^\omega-E-\iota\epsilon)^{-1}P_n$.

Second and third part of the theorem \ref{MainThm} are consequences of perturbations by two projections, and the first part is because of Kolmogorov $0$-$1$ law. Lemma \ref{InvLem2} is the primary step for the first part of the main theorem. It tells us that the event $E_{n,m}$ (\emph{$Q^\omega_nP_m$ has same rank as $P_m$}), is independent of any other perturbation, whence Kolmogorov $0$-$1$ law applies. For second part, whenever the condition is satisfied, we have to show that for $E$ in a
full measure set, the density of the measure has same rank for both indices; this is done in corollary \ref{AbsEqu1}. For the last part, the second part of the theorem \ref{MainThm} helps by asserting that absolute continuous parts are equivalent and for the singular part we only need to consider the lowest (Hausdorff) dimensional part. This is the case because we are using Poltoratskii's theorem \cite{POL1}, and lowest dimensional part of the spectrum contributes maximum rate of growth to the
Herglotz function as its argument approaches the boundary of $\CC^{+}$ . Corollary \ref{SingLem3} gives the equivalence for the lowest dimensional parts of the measure.

Before attempting to handle the problem, it is important to note that the \emph{set of perturbations where the procedure may not be applicable is a measure zero set}. Lemma \ref{MbleVar1} gives such a statement, and also tells us that for almost all perturbation, the measure of singular part (w.r.t to Lebesgue measure) is zero.

%%%%%%%%%%%%%%%%%%%%%%%%%%%%%%%%%%%%%%%%%%%%%%%%%%%%%%%%%%%%%%%%%%%%%%%%%%%%%%%%%%%%%%%%%%%%%%%%%%%%%%%%%%%%%%%%%%%%%%%%%%%%%%%%%%%%%%%%%%%%%%%%%%%%%%%%%%%%%%%%%%%%%%%%%%%%%%%%%%%%%%%%%%%%%%%%%%%%%%%%%%%%%%%%%%%%%%

\section{Preliminaries}
Following lemma is a result concerning the zero sets of polynomials. This lemma helps in the proof of our main theorem by ensuring that for almost all perturbation the set where singular part lie is measure zero.
\begin{lemma}\label{MbleVar1}
For a $\sigma$-finite positive measure space $(X,\mathscr{B},m)$, and a collection of measurable functions $a_i:X\rightarrow\CC$, define the function $f(\lambda,x)=1+\sum_{n=1}^N \lambda^n a_n(x)$. The set defined by
\begin{equation}\label{solset1}
\Lambda_f=\{\lambda\in\CC|m\{x\in X| f(\lambda,x)=0\}>0\}
\end{equation}
is countable.
\end{lemma}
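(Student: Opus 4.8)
The plan is to exploit the fact that, for each fixed $x\in X$, the map $\lambda\mapsto f(\lambda,x)=1+\sum_{n=1}^N\lambda^n a_n(x)$ is a polynomial of degree at most $N$ whose constant term equals $1$; in particular it is not the zero polynomial, so it has at most $N$ roots in $\CC$. Writing $Z_\lambda=\{x\in X\,|\,f(\lambda,x)=0\}$ (a measurable set, since $f(\lambda,\cdot)$ is a finite linear combination of the measurable functions $a_n$), this observation says precisely that for every $x$ the number of $\lambda\in\CC$ with $x\in Z_\lambda$ is at most $N$. This uniform pointwise bound is the whole engine of the argument.

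Next I would reduce to the case of a finite measure space. By $\sigma$-finiteness write $X=\bigcup_{k\geq 1}X_k$ with $m(X_k)<\infty$. If $\lambda\in\Lambda_f$ then $m(Z_\lambda)>0$, so $m(Z_\lambda\cap X_k)>0$ for some $k$; hence $\Lambda_f=\bigcup_k\{\lambda\,|\,m(Z_\lambda\cap X_k)>0\}$. Since a countable union of countable sets is countable, it suffices to prove the claim for each $X_k$ separately, so we may assume $m(X)<\infty$.

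Then I would stratify by the size of the slice: $\Lambda_f=\bigcup_{j\geq1}\Lambda_f^{(j)}$ where $\Lambda_f^{(j)}=\{\lambda\,|\,m(Z_\lambda)>1/j\}$, and show each $\Lambda_f^{(j)}$ is finite. Fix $\epsilon>0$ and suppose $\lambda_1,\dots,\lambda_M$ are distinct points with $m(Z_{\lambda_i})>\epsilon$. The key counting step combines the root bound with integration: for every $x$ one has $\sum_{i=1}^M\mathbf{1}_{Z_{\lambda_i}}(x)\leq N$, because the indices $i$ for which $x$ is a zero of $f(\lambda_i,\cdot)$ correspond to distinct roots of the single polynomial $f(\cdot,x)$, of which there are at most $N$. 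Integrating this pointwise inequality over $X$ gives
\begin{equation*}
M\epsilon<\sum_{i=1}^M m(Z_{\lambda_i})=\int_X\sum_{i=1}^M\mathbf{1}_{Z_{\lambda_i}}(x)\,dm(x)\leq N\,m(X),
\end{equation*}
so $M<N\,m(X)/\epsilon$. Thus $\{\lambda\,|\,m(Z_\lambda)>\epsilon\}$ has at most $N\,m(X)/\epsilon$ elements; taking $\epsilon=1/j$ and the countable union over $j$ shows $\Lambda_f$ is countable.

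The argument is essentially elementary once the polynomial root bound is isolated; the only genuine care needed is the passage from $\sigma$-finite to finite measure (so that the bound $N\,m(X)$ is finite and the integration step is meaningful) together with the verification that the integrand is measurable. I expect this reduction to be the main, though minor, technical point, since the heart of the proof is the single pointwise inequality $\sum_i\mathbf{1}_{Z_{\lambda_i}}\leq N$.
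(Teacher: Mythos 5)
Your proof is correct, and it takes a genuinely different and substantially simpler route than the paper. The paper proceeds by induction on the degree $N$: the base case $N=1$ uses the pairwise disjointness of the slices $Z_\lambda$, and the inductive step introduces an equivalence relation on $\Lambda_f$ (chains of $\lambda$'s whose slices pairwise overlap in positive measure), shows each equivalence class is countable by restricting to $S_\lambda$ and factoring out the root $\lambda$ to drop the degree by one, and then shows the collection of classes is itself countable because the sets $S_{[\lambda]}$ are essentially disjoint. Your argument replaces all of this with a single double-counting step: the pointwise bound $\sum_{i=1}^M \mathbf{1}_{Z_{\lambda_i}}(x)\le N$, valid because $f(\cdot,x)$ is a nonzero polynomial of degree at most $N$ (its constant term is $1$), integrated over a finite-measure piece $X_k$. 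This yields the quantitative conclusion that $\{\lambda : m(Z_\lambda\cap X_k)>\epsilon\}$ has at most $N\,m(X_k)/\epsilon$ elements, something the paper's induction does not give, and it avoids both the induction and the equivalence-relation bookkeeping. The only places requiring care --- the reduction from $\sigma$-finite to finite measure and the measurability of the finite sum of indicators --- are handled correctly in your write-up.
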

\begin{proof}
The proof is by induction on degree of $f$ (as a polynomial of $\lambda$). We will use the notation:
\begin{equation}\label{MV1Eq1}
S_\lambda=\{x\in X| f(\lambda,x)=0\} 
\end{equation}
By definition the sets $S_\lambda$ are measurable.

Base case of induction is $N=1$, so $f(\lambda,x)=1+\lambda a_1(x)$. Clearly for $\lambda_1\neq \lambda_2\in\CC$ we have $S_{\lambda_1}\cap S_{\lambda_2}=\phi$. Since, if $x\in S_{\lambda_1}\cap S_{\lambda_2}$ then
\begin{align*}
&1+\lambda_1 a_1(x)=0\ \&\ 1+\lambda_2 a_1(x)=0\\
\Rightarrow\qquad &\frac{1}{\lambda_1}=-a_1(x)=\frac{1}{\lambda_2}\\
\Rightarrow\qquad &\lambda_1=\lambda_2
\end{align*}
but we assumed $\lambda_1\neq\lambda_2$. Since $(X,m)$ is $\sigma$-finite, we have a countable collection $\{X_i\}_{i\in\NN}$ such that $\cup_i X_i=X$ and for each $i$ we have $m(X_i)<\infty$. Now for each $\lambda\in\CC$ and $n\in\NN$ define $S_{\lambda,n}=S_\lambda\cap X_n$, so we have $\cup_n S_{\lambda,n}=S_\lambda$, and $\cup_{\lambda\in\Lambda_f}S_{\lambda,n}\subset X_n$. We have
\begin{align*}
\sum_{\lambda\in\Lambda_f}m(S_{\lambda,n})=m(\cup_{\lambda\in\Lambda_f}S_{\lambda,n})\leq m(X_n)<\infty,
\end{align*}
so only for countably many $\lambda\in\Lambda_f$ we have $m(S_{\lambda,n})\neq 0$. Set $\Lambda_n=\{\lambda\in\Lambda_f| m(S_{\lambda,n})>0\}$, we have $\Lambda_f=\cup_{n\in\NN}\Lambda_n$, but since countable union of countable set is countable, we get $\Lambda_f$ countable. This completes base case.

Now assume the induction hypothesis, i.e for measurable functions $a_i:X\rightarrow\CC$, and $f(\lambda,x)=1+\sum_{n=1}^{N}\lambda^n a_n(x)$, the set $\Lambda_f$ is countable.

We have to show for $f(\lambda,x)=1+\sum_{n=1}^{N+1}\lambda^n a_n(x)$, the set $\Lambda_f$ is countable. First we define the relation $\sim$ for elements of $\Lambda_f$; for $\mu,\nu\in\Lambda_f$ we define $\mu\sim\nu$ if there exists $\{\lambda_i\}_{i=1}^k$ such that $\lambda_1=\mu$, $\lambda_k=\nu$ and $m(S_{\lambda_i}\cap S_{\lambda_{i+1}})>0$ for $i=1,\cdots,k-1$.  For $\mu\in\Lambda_f$ we have $\mu\sim\mu$ because $m(S_\mu)>0$ hence $\sim$ is reflexive. If $\mu\sim\nu$ for $\mu,\nu\in\Lambda_f$, then we have a sequence $\{\lambda_i\}_{i=1}^k$ such that $\lambda_1=\mu$ and $\lambda_k=\nu$ and $m(S_{\lambda_i}\cap S_{\lambda_i+1})>0$, hence choosing $\tilde{\lambda}_i=\lambda_{k-i+1}$ we get $\nu\sim \mu$ and so $\sim$ is symmetric. If $\mu\sim\nu$ and $\nu\sim\eta$, then we have sequences $\{\alpha_i\}_{i=1}^p$ and $\{\beta_i\}_{i=1}^q$ such that $\alpha_1=\mu$, $\alpha_p=\beta_1=\nu$ and $\beta_q=\eta$, so defining the sequence $\{\lambda_i\}_{i=1}^{p+q}$ defined as $\lambda_i=\alpha_i$ for $i\leq p$ 
and $\lambda_{i}=\beta_{i-p}$ for $i>p$ we get $\mu\sim\eta$ giving transitivity of $\sim$.
So $\sim$ is a equivalence relation on $\Lambda_f$, and can break the set $\Lambda_f$ into equivalence classes indexed by  $\tilde{\Lambda}=\Lambda_f/\sim$, where we view $[\lambda]\in\tilde{\Lambda}$ as $[\lambda]=\{\mu\in\Lambda_f|\mu\sim\lambda\}$ and define $S_{[\lambda]}=\cup_{\mu\in[\lambda]}S_\mu$.

First we will show for any $[\lambda]\in\tilde{\Lambda}$, the set $[\lambda]$ is countable. Let $\lambda\in\Lambda_f$, so we have the $m(S_\lambda)\neq0$. We will restrict to subspace $S_\lambda$, on this space $f(\nu,x)$ can be written as $f(\nu,x)=\frac{1}{\lambda}(\lambda-\nu)\left(1+\sum_{n=1}^N \tilde{a}_n(x)\nu^n\right)$ (since $\lambda$ is a solution). So we have the new function $\tilde{f}(\nu,x)=1+\sum_{n=1}^N \tilde{a}_n(x)\nu^n$, and by our assumption (induction hypothesis) we get $\Lambda_{\tilde{f}}$ is countable. For any $\nu\in\Lambda_f$ with $m(S_\lambda\cap S_\nu)\neq 0$ implies $\nu\in\Lambda_{\tilde{f}}$, so for fixed $\lambda\in\Lambda_f$ the set of $\nu\in\Lambda_f$ such that $m(S_\lambda\cap S_\nu)\neq 0$ is countable.

Next choose $\lambda\in\Lambda_f$, and set $A_0=\{\lambda\}$, and define
$$A_i=\cup_{\beta \in A_{i-1}}\{\nu\in\Lambda_f| m(S_\nu\cap S_\beta)\neq 0\}\qquad\forall i\in\NN$$
by previous step each $A_{i}$ are countable. So $\cup_{i=0}^\infty A_{i}$ is countable. By definition of $\sim$ we have $[\lambda]=\cup_{i=0}^\infty A_{i}$.

Now we will prove $\tilde{\Lambda}$ is countable. By definition $m(S_{[\lambda]})>0$ for $[\lambda]\in\tilde{\Lambda}$, and for $[\lambda]\neq [\mu]\in\tilde{\Lambda}$ we have $m(S_{[\lambda]}\cap S_{[\nu]})=0$. For $n\in\NN$ define $S_{[\lambda],n}=S_{[\lambda]}\cap X_n$, then we have
\begin{align*}
\sum_{n\in\tilde{\Lambda}}m(S_{[\lambda],n})=m(\cup_{[\lambda]\in\tilde{\Lambda}}S_{[\lambda],n})\leq m(X_i)<\infty
\end{align*}
From last step only countably many $[\lambda]$ can have $m(S_{[\lambda],n})> 0$. Call $\tilde{\Lambda}_n=\{[\lambda]\in\tilde{\Lambda}| m(S_{[\lambda],n})>0\}$ (which are countable); for any $[\lambda]\in\tilde{\Lambda}$ we have
$$0<m(S_{[\lambda]})\leq \sum_{n\in\NN}m(S_{[\lambda],n})$$
So $[\lambda]\in \tilde{\Lambda}$ for some $n\in\NN$ we have $m(S_{[\lambda],n})>0$, hence $\tilde{\Lambda}=\cup_{n\in\NN}\tilde{\Lambda}_n$; giving us $\tilde{\Lambda}$ is countable.

Since $\Lambda_f=\cup_{[\lambda]\in\tilde{\Lambda}}[\lambda]$ and both the sets are countable we get the countability of $\Lambda_f$.

\end{proof}
\begin{rem}
It should be clear that above result holds for a function of the type $f(\lambda,x)=\sum_{n=0}^N a_n(x)\lambda^n$ on the set $\{x\in X|a_0(x)\neq 0\}$. One should note that one cannot extend the result for whole of $X$.

We can view $f(\lambda,x)=\lambda^N\left(\sum_{n=0}^N a_{N-n}(x)\left(\frac{1}{\lambda}\right)^n\right)$, and so the result also holds on the set $\{x\in X|a_N(x)\neq 0\}$.
\end{rem}
\begin{cor}\label{MbleVar2}
For a $\sigma$-finite positive measure space $(X,\mathscr{B},m)$ and a collection of functions $a_i:X\rightarrow\CC$, $b_i:X\rightarrow\CC$, define the function $f(\lambda,x)=\frac{1+\sum_{i=1}^N a_i(x)\lambda^i}{1+\sum_{i=1}^N b_i(x)\lambda^i}$, then the set
\begin{equation}\label{solset2}
\Lambda_f=\{\lambda\in\CC|m\{x\in X| f(\lambda,x)=0\}\neq0\}
\end{equation}
is countable
\end{cor}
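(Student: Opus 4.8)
The plan is to reduce this statement directly to Lemma \ref{MbleVar1} by discarding the denominator. The key observation is that the quotient $f(\lambda,x)$ vanishes precisely at those $x$ where the numerator $1+\sum_{i=1}^N a_i(x)\lambda^i$ vanishes while the denominator $1+\sum_{i=1}^N b_i(x)\lambda^i$ does not; in particular the zero set of $f$ is always contained in the zero set of its numerator, and the denominator plays no essential role.

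First I would introduce the numerator polynomial $g(\lambda,x)=1+\sum_{i=1}^N a_i(x)\lambda^i$, which is exactly of the form handled by Lemma \ref{MbleVar1}. For each fixed $\lambda\in\CC$ set
$$S_\lambda^f=\{x\in X\mid f(\lambda,x)=0\},\qquad S_\lambda^g=\{x\in X\mid g(\lambda,x)=0\}.$$
Since $f(\lambda,x)=0$ forces $g(\lambda,x)=0$, we have the inclusion $S_\lambda^f\subseteq S_\lambda^g$, and hence $m(S_\lambda^f)\leq m(S_\lambda^g)$ for every $\lambda$. From this inequality the containment of the solution sets follows immediately: if $\lambda\in\Lambda_f$, i.e. $m(S_\lambda^f)\neq 0$, then $m(S_\lambda^g)>0$, so $\lambda\in\Lambda_g$. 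Thus $\Lambda_f\subseteq\Lambda_g$. Applying Lemma \ref{MbleVar1} to the function $g$ shows that $\Lambda_g$ is countable, and a subset of a countable set is countable, so $\Lambda_f$ is countable.

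There is no serious obstacle here; the only point requiring a word of care is measurability, namely that $S_\lambda^f$ is measurable so that $m(S_\lambda^f)$ makes sense. This is inherited from the measurability of the $a_i$ and $b_i$: the numerator and denominator are measurable in $x$, and $S_\lambda^f$ is the intersection of the measurable set $\{x\mid g(\lambda,x)=0\}$ with the measurable set where the denominator is nonzero. One might also note that the apparent hypothesis on the denominator is never actually used; the conclusion holds for the zero set of any ratio whose numerator carries the normalized form $1+\sum_{i=1}^N a_i(x)\lambda^i$, independently of what sits in the denominator, which is consistent with the remark following Lemma \ref{MbleVar1} about replacing the leading or constant term.
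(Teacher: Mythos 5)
Your proposal is correct and follows essentially the same route as the paper: both discard the denominator, observe that the zero set of the quotient is contained in the zero set of the numerator $g(\lambda,x)=1+\sum_{i=1}^N a_i(x)\lambda^i$, and then invoke Lemma \ref{MbleVar1}. Your write-up is in fact slightly more careful than the paper's (which states the set inclusion and cites the lemma in one line), since you spell out the implication $\Lambda_f\subseteq\Lambda_g$ and address measurability of $S_\lambda^f$.
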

\begin{proof}
Set $g(\lambda,x)=1+\sum_{n=1}^N a_n(x)\lambda^n$, then $\{(x,\mu)\in X\times\CC| f(\lambda,x)=0\}\subseteq \{(x,\mu)\in X\times\CC| g(\lambda,x)=0\}$. So by lemma \ref{MbleVar1} we get the desired result.

\end{proof}
We will need the spectral averaging result (see\cite[Corollary 4.2]{CH} for proof):
\begin{lemma}\label{SpecAvaLem}
Let $E_\lambda(\cdot)$ be the spectral family for the operator $A_\lambda=A+\lambda P$, where $A$ is self adjoint operator, and $P$ is a rank $N$ projection. Then for $M\subset\RR$ such that $|M|=0$ (Lebesgue measure), we have $PE_\lambda(M)P=0$ for Lebesgue almost all $\lambda$.
\end{lemma}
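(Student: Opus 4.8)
The plan is to reduce the operator statement to a scalar one and then establish an exact spectral-averaging identity in the coupling parameter $\lambda$. First I would note that $PE_\lambda(M)P$ is a positive semidefinite operator on the finite-dimensional space $P\Hi$, so it vanishes if and only if its trace does; writing $\sigma_\lambda(M)=\mathrm{tr}(PE_\lambda(M)P)$, it therefore suffices to show that $\sigma_\lambda(M)=0$ for Lebesgue-a.e.\ $\lambda$ whenever $|M|=0$. I would then pass to the matrix-valued Herglotz function $F(z)=P(A-z)^{-1}P$ on $P\Hi$ and its perturbed analogue $M_\lambda(z)=P(A_\lambda-z)^{-1}P$. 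The second resolvent identity gives the Krein-type formula $M_\lambda(z)=(I+\lambda F(z))^{-1}F(z)$, whose associated trace measure is exactly $\sigma_\lambda$, i.e.\ $\mathrm{tr}(M_\lambda(z))=\int (x-z)^{-1}\,d\sigma_\lambda(x)$.

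The heart of the argument is an exact evaluation of the $\lambda$-average of $\mathrm{tr}(\mathrm{Im}\,M_\lambda)$. Using $\mathrm{tr}(\mathrm{Im}\,X)=\mathrm{Im}\,\mathrm{tr}(X)$ together with Jacobi's formula, one has $\mathrm{tr}(M_\lambda(z))=\frac{d}{d\lambda}\log\det(I+\lambda F(z))$, so that $\mathrm{tr}(\mathrm{Im}\,M_\lambda(z))=\frac{d}{d\lambda}\arg\det(I+\lambda F(z))$. For $\mathrm{Im}\,z>0$ the resolvent identity shows $\mathrm{Im}\,F(z)=(\mathrm{Im}\,z)\,P(A-\bar z)^{-1}(A-z)^{-1}P$ is strictly positive definite, whence every eigenvalue $\mu_j(z)$ of $F(z)$ lies in the open upper half-plane. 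Writing $\det(I+\lambda F)=\prod_j(1+\lambda\mu_j)$ with the $\mu_j$ counted with algebraic multiplicity, each factor is nonzero for real $\lambda$ and its argument increases by exactly $\pi$ as $\lambda$ runs from $-\infty$ to $+\infty$, giving
\begin{equation*}
\int_\RR \mathrm{tr}\bkt{\mathrm{Im}\,M_\lambda(E+i\epsilon)}\,d\lambda=N\pi
\end{equation*}
for every $E\in\RR$ and every $\epsilon>0$, independently of $E$ and $\epsilon$ (the integrand decays like $\lambda^{-2}$, so the integral converges).

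To conclude I would feed this identity into a Poisson-kernel estimate. The quantity $\frac1\pi\mathrm{tr}(\mathrm{Im}\,M_\lambda(E+i\epsilon))$ is the Poisson integral $(P_\epsilon*\sigma_\lambda)(E)$, so for an open set $U$ Fatou's lemma applied to $(P_\epsilon*\chi_U)(x)\to\chi_U(x)$ gives $\sigma_\lambda(U)\le\liminf_{\epsilon\downarrow0}\frac1\pi\int_U\mathrm{tr}(\mathrm{Im}\,M_\lambda(E+i\epsilon))\,dE$. Integrating in $\lambda$ and interchanging the order of integration by Tonelli (the integrand is jointly continuous in $(\lambda,E)$, hence measurable, and nonnegative) yields $\int_\RR\sigma_\lambda(U)\,d\lambda\le\liminf_\epsilon\frac1\pi\int_U N\pi\,dE=N|U|$. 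Given $|M|=0$, choosing open sets $U\supseteq M$ with $|U|\to0$ forces $\int_\RR\sigma_\lambda(M)\,d\lambda=0$, and since $\sigma_\lambda(M)\ge0$ we obtain $\sigma_\lambda(M)=0$, i.e.\ $PE_\lambda(M)P=0$, for a.e.\ $\lambda$.

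The main obstacle is pinning down the exact constant $N\pi$: this needs the strict positivity of $\mathrm{Im}\,F(z)$ (so that no eigenvalue of $F$ is real and each therefore contributes a full $\pi$ to the total change of argument) and care that the $\det$-argument is genuinely independent of $z$ even when $F(z)$ fails to be normal or diagonalizable, which is handled by working with the characteristic factors $1+\lambda\mu_j$ counted with algebraic multiplicity. Once the uniform bound $\int_\RR\sigma_\lambda(U)\,d\lambda\le N|U|$ is established, the measure-theoretic passage from the smoothed Poisson density back to $\sigma_\lambda$ on the null set $M$ is routine.
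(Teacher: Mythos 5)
Your proof is correct, but there is nothing in the paper to compare it against: the paper does not prove Lemma \ref{SpecAvaLem} at all, it simply quotes it from \cite{CH} (Corollary 4.2 there). What you have supplied is a complete, self-contained proof along the standard spectral-averaging lines, and every step checks out: the reduction to the trace measure $\sigma_\lambda$ is valid because $PE_\lambda(M)P\geq 0$ on the $N$-dimensional space $P\Hi$; the Krein formula $M_\lambda(z)=F(z)(I+\lambda F(z))^{-1}$ follows from the second resolvent identity; $\Im F(z)=(\Im z)\,P(A-\bar z)^{-1}(A-z)^{-1}P$ is indeed strictly positive on $P\Hi$, so every eigenvalue $\mu_j(z)$ lies in the open upper half-plane, $\det(I+\lambda F(z))\neq 0$ for all real $\lambda$, and $\Im\,\mathrm{tr}\,M_\lambda=\sum_j \Im\mu_j/|1+\lambda\mu_j|^2$ integrates in $\lambda$ to exactly $N\pi$ (each factor contributes $\pi$, with algebraic multiplicity taking care of non-diagonalizable $F$); the Poisson-kernel/Fatou step then gives $\int_\RR\sigma_\lambda(U)\,d\lambda\leq N|U|$ for open $U$. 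Two points worth tightening in the write-up: (i) when you pull $\int d\lambda$ past $\liminf_{\epsilon\downarrow 0}$ you are invoking Fatou a second time (legitimate since the integrand is nonnegative), which should be said explicitly; (ii) rather than asserting $\int_\RR\sigma_\lambda(M)\,d\lambda=0$ (which presumes measurability of $\lambda\mapsto\sigma_\lambda(M)$), it is cleaner to take open sets $U_n\supseteq M$ with $|U_n|\leq 1/n$, note $\sigma_\lambda(M)\leq\sigma_\lambda(U_n)$ with $\int_\RR\sigma_\lambda(U_n)\,d\lambda\leq N/n$, and extract an a.e.\ convergent subsequence of this $L^1$-null sequence. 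With those cosmetic repairs your argument is a valid replacement for the external citation, and in fact proves the quantitative bound $\int_\RR \mathrm{tr}\bigl(PE_\lambda(U)P\bigr)\,d\lambda\leq N|U|$, which is slightly more than the lemma asks for.
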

This lemma guarantees us that we can omit any Lebesgue measure zero set from any analysis that follows. Following lemma from \cite[Proposition 2.1]{JL1} will be used extensively, as it guarantees the existence  of limits, almost surely. We denote $\Hi_\phi$ to be the cyclic subspace generated by $A$ and $\phi\in\Hi$.
\begin{lemma}\label{NonZeroTms1}
Let $A$ be a self adjoint operator on a separable Hilbert space $\Hi$ with $\phi,\psi\in\Hi$ such that $\Hi_\phi\not\perp\Hi_\psi$. Then for a.e $E\in\RR$ (Lebesgue) the limit
$$\lim_{\epsilon\downarrow 0}\dprod{\phi}{(A-E-\iota\epsilon)^{-1}\psi}=\dprod{\phi}{(A-E-\iota 0)^{-1}\psi}$$
exists and is non-zero.
\end{lemma}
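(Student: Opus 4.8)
The plan is to realize $F_{\phi,\psi}(z):=\dprod{\phi}{(A-z)^{-1}\psi}$ as the Borel transform of the complex spectral measure $\mu_{\phi,\psi}(\cdot):=\dprod{\phi}{E_A(\cdot)\psi}$, which is a finite complex measure on $\RR$ because $\phi,\psi\in\Hi$; explicitly $F_{\phi,\psi}(z)=\int_\RR (t-z)^{-1}\,d\mu_{\phi,\psi}(t)$ for $z\in\CC^+$. The statement then splits into two independent assertions about this Borel transform: (a) existence of the vertical boundary limit as $\epsilon\downarrow0$ for a.e.\ $E$, and (b) non-vanishing of that limit for a.e.\ $E$. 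Both are facts about Cauchy transforms of finite measures, and the entire content of the hypothesis $\Hi_\phi\not\perp\Hi_\psi$ enters only through the conclusion that $\mu_{\phi,\psi}\neq0$.

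First I would dispose of the role of the cyclic-subspace hypothesis. Orthogonality $\Hi_\phi\perp\Hi_\psi$ is equivalent to $\dprod{f(A)\phi}{g(A)\psi}=0$ for all bounded Borel $f,g$, i.e.\ $\int_\RR\overline{f(t)}g(t)\,d\mu_{\phi,\psi}(t)=0$ for all such $f,g$; taking $f\equiv1$ and letting $g$ range over a separating family, this is equivalent to $\mu_{\phi,\psi}=0$. Hence $\Hi_\phi\not\perp\Hi_\psi$ is precisely the statement $\mu_{\phi,\psi}\neq0$, and therefore $F_{\phi,\psi}\not\equiv0$ on $\CC^+$, since a nonzero finite measure has nonzero Borel transform by the Stieltjes--Perron inversion formula, which recovers $\mu_{\phi,\psi}$ from $F_{\phi,\psi}$.

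For the existence in (a) I would use that the Cauchy transform of a finite complex measure is a function of bounded type (Nevanlinna class) in $\CC^+$, and such functions possess finite non-tangential, and in particular vertical, boundary values at a.e.\ $E\in\RR$. Concretely one may polarize $\mu_{\phi,\psi}$ into the four positive spectral measures $\mu_\chi(\cdot)=\dprod{\chi}{E_A(\cdot)\chi}$ with $\chi\in\{\phi+\psi,\ \phi-\psi,\ \phi+\iota\psi,\ \phi-\iota\psi\}$, so that $F_{\phi,\psi}$ is a fixed complex linear combination of the Herglotz functions $F_\chi(z)=\dprod{\chi}{(A-z)^{-1}\chi}$, each of which has boundary values a.e.\ by Fatou's theorem.

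The main obstacle is (b), the non-vanishing, because it is not preserved under taking linear combinations of the Herglotz pieces, so it cannot be read off from the four diagonal transforms individually. Here I would invoke the fact that a nonzero function $F$ of bounded type in $\CC^+$ has $\log|F(\cdot+\iota0)|$ integrable against the Poisson kernel $dE/(1+E^2)$, hence finite for a.e.\ $E$, so that $F(E+\iota0)\neq0$ for a.e.\ $E$. Applying this to $F=F_{\phi,\psi}$, which is nonzero on $\CC^+$ by the second paragraph, yields that the boundary limit is non-zero a.e.\ and completes the proof. This is exactly the circle of ideas around the theorem of F.\ and M.\ Riesz invoked in the introduction: were $F_{\phi,\psi}(E+\iota0)$ to vanish on a set of positive Lebesgue measure, the inner--outer (Nevanlinna) factorization would force $F_{\phi,\psi}\equiv0$, contradicting $\mu_{\phi,\psi}\neq0$.
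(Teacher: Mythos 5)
Your reduction of the hypothesis $\Hi_\phi\not\perp\Hi_\psi$ to $\mu_{\phi,\psi}\neq 0$, the existence argument (polarization into four Herglotz functions plus Fatou), and the final step (a not-identically-zero function of bounded type has non-vanishing boundary values a.e., by the Nevanlinna factorization) are all sound. The gap is the one sentence that the whole non-vanishing claim rests on: that $\mu_{\phi,\psi}\neq0$ forces $F_{\phi,\psi}\not\equiv0$ on $\CC^+$ ``by the Stieltjes--Perron inversion formula.'' That formula recovers the measure from the jump of the transform across $\RR$, i.e.\ from its values on \emph{both} half-planes; its values on $\CC^{+}$ alone do not determine a complex measure. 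Explicitly: let $A$ be multiplication by $t$ on $L^2(\RR,dt)$, $\phi(t)=(t+\iota)^{-1}$, $\psi(t)=(t-\iota)^{-1}$. Then $d\mu_{\phi,\psi}(t)=(t-\iota)^{-2}\,dt\neq 0$ and $\Hi_\phi=\Hi_\psi=L^2(\RR)$, yet $\dprod{\phi}{(A-z)^{-1}\psi}=\int_\RR (t-\iota)^{-2}(t-z)^{-1}\,dt=0$ for every $z\in\CC^{+}$, since both poles of the integrand lie in the upper half-plane and the contour closes below. So your bounded-type argument has nothing nonzero to apply to --- and, in fact, this example shows the lemma in the bald form stated here fails: the limit exists everywhere but is identically zero even though the cyclic subspaces are not orthogonal.

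This is exactly the F.\ and M.\ Riesz issue that the introduction flags and that the set $\mathscr{M}$ is designed to exclude. The correct substitute for your inversion step is \cite[Theorem 2.2]{JL2}: if the Borel transform of a nonzero complex measure vanishes identically on $\CC^{+}$, then the measure is absolutely continuous and its total variation is \emph{equivalent} to Lebesgue measure. Since $|\mu_{\phi,\psi}(\Omega)|\leq\sqrt{\mu_\phi(\Omega)\mu_\psi(\Omega)}$ gives $|\mu_{\phi,\psi}|\ll\mu_\phi$, this scenario is ruled out whenever $\mu_\phi$ (or $\mu_\psi$) is not equivalent to Lebesgue measure, which is precisely the standing restriction to indices in $\mathscr{M}$ under which the paper applies the lemma. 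Note that the paper supplies no proof of its own (the statement is imported from \cite[Proposition 2.1]{JL1}); apart from the step just discussed, your route --- bounded type, a.e.\ finite boundary values, Privalov/Nevanlinna non-vanishing --- is the standard one, but as written your proof needs either the additional hypothesis that $|\mu_{\phi,\psi}|$ is not equivalent to Lebesgue measure or an appeal to the cited theorem in place of the inversion argument.
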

We note that the limit always exists a.e $E$, and it is non-zero if and only if $\Hi_\phi\not\perp\Hi_\psi$.
We will need Poltoratskii's theorem \cite{POL1}.
\begin{thm}\label{PolThm}
For any complex valued Borel measure $\mu$ on $\RR$ and for $f\in L^1(\RR,d\mu)$, with Borel transform $F_\mu(z)=\int\frac{d\mu(z)}{x-z}$
$$\lim_{\epsilon\rightarrow 0}\frac{F_{f\mu}(E+\iota\epsilon)}{F_{\mu}(E+\iota\epsilon)}=f(E)$$
for a.e $E$ with respect to $\mu$-singular.
\end{thm}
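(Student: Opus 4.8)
The plan is to strip $\mu$ down to a positive measure and then exploit the blow-up of $|F_\mu|$ on the singular set. First I would remove the complex phase: writing the polar decomposition $d\mu=h\,d|\mu|$ with $|h|=1$ holding $|\mu|$-a.e., one has $F_{f\mu}=F_{(fh)|\mu|}$ and $F_\mu=F_{h|\mu|}$, so
$$\frac{F_{f\mu}(E+\iota\epsilon)}{F_\mu(E+\iota\epsilon)}=\frac{F_{(fh)|\mu|}(E+\iota\epsilon)/F_{|\mu|}(E+\iota\epsilon)}{F_{h|\mu|}(E+\iota\epsilon)/F_{|\mu|}(E+\iota\epsilon)},$$
and since $|\mu|_{\mathrm{sing}}$ and $\mu_{\mathrm{sing}}$ have the same carrier and $|h|=1$ never vanishes, it suffices to prove the theorem for the positive finite measure $|\mu|$. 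After normalizing to a probability measure and reducing to bounded $f$ by truncation, I would realize $F_\mu(z)=\dprod{\phi}{(A-z)^{-1}\phi}$ for a self-adjoint operator $A$ with cyclic vector $\phi$ whose spectral measure is $\mu$, setting up the rank-one machinery used below.

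The first analytic ingredient is the behavior of the denominator. From the Poisson representation $\mathrm{Im}\,F_\mu(E+\iota\epsilon)=\int\frac{\epsilon}{(t-E)^2+\epsilon^2}\,d\mu(t)$ for positive $\mu$, a standard differentiation-of-measures argument shows the symmetric derivative of $\mu$ with respect to Lebesgue measure equals $+\infty$ at $\mu_{\mathrm{sing}}$-a.e.\ point, so $|F_\mu(E+\iota\epsilon)|\to\infty$ for $\mu_{\mathrm{sing}}$-a.e.\ $E$. This is what makes the ratio well posed and lets the mass of $f\mu$ lying away from $E$ become negligible against the denominator.

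The naive attempt is to write $\frac{F_{f\mu}(E+\iota\epsilon)}{F_\mu(E+\iota\epsilon)}-f(E)=\frac{F_{(f-f(E))\mu}(E+\iota\epsilon)}{F_\mu(E+\iota\epsilon)}$ and to choose $E$ to be simultaneously a blow-up point and a $\mu$-Lebesgue point of $f$. The \emph{imaginary} part of the numerator is then dominated by $\int\frac{\epsilon}{(t-E)^2+\epsilon^2}\,|f(t)-f(E)|\,d\mu(t)$, which a weak-type $(1,1)$ maximal inequality adapted to $\mu$ forces to be $o\big(\mathrm{Im}\,F_\mu\big)$ along a set of full $\mu_{\mathrm{sing}}$-measure. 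The main obstacle is the \emph{real} part: $\mathrm{Re}\,F_{(f-f(E))\mu}$ is governed by the conjugate Poisson kernel, i.e.\ a truncated Hilbert transform of $(f-f(E))\mu$, which is not controlled by any positive kernel and can remain large even where the local $\mu$-oscillation of $f$ is small. This is exactly the point that makes Poltoratskii's theorem deep, and it cannot be settled by the Poisson-kernel bookkeeping above.

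I would circumvent this obstacle by the spectral-averaging route of Jak\v{s}i\'{c}--Last rather than by any direct conjugate-function estimate. Using the rank-one perturbations $A_\lambda=A+\lambda\dprod{\phi}{\cdot}\phi$, the Aronszajn--Krein formula gives $F_{\mu_\lambda}=F_\mu/(1+\lambda F_\mu)$, so the normalized Cauchy transform of the denominator is governed by $1+\lambda F_\mu$; the Aronszajn--Donoghue theory then shows the singular parts $\{\mu_{\lambda,\mathrm{sing}}\}_{\lambda}$ are mutually singular and concentrated on $\{E:\,F_\mu(E+\iota 0)=-1/\lambda\}$, so that $\mu_{\mathrm{sing}}$ is recovered (up to null sets) as the union of these carriers as $\lambda$ ranges over $\RR$. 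On the carrier of $\mu_{\lambda,\mathrm{sing}}$ the relation $1+\lambda F_\mu\to0$ pins the boundary behavior, and a Fubini argument combined with the spectral-averaging identity $\int_{\RR}\mu_\lambda(\cdot)\,d\lambda$ equal to Lebesgue measure upgrades the fixed-$\lambda$ statement to the pointwise limit $f(E)$ for $\mu_{\mathrm{sing}}$-a.e.\ $E$. The decisive step is this trade of the intractable Hilbert-transform bound for a clean measure-theoretic identity across the perturbation family.
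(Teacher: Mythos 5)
First, a point of reference: the paper does not prove Theorem \ref{PolThm} at all --- it is quoted from Poltoratskii \cite{POL1} and the proof is explicitly delegated to Jak\v{s}i\'{c}--Last \cite{JL3} --- so your sketch is in effect an attempt to reconstruct the argument of \cite{JL3}. You correctly isolate why the naive Lebesgue-point argument fails (the real part of $F_{(f-f(E))\mu}$ is a conjugate-Poisson/Hilbert-transform object not controlled by the local $\mu$-oscillation of $f$), and you correctly name the rank-one spectral-averaging framework as the cure. But the decisive step of your outline contains a false structural claim and a genuine gap. By Aronszajn--Donoghue the singular parts $(\mu_\lambda)_{\mathrm{sing}}$ are \emph{mutually singular}: for $\lambda\neq 0$ the carrier $\{E:\ F_\mu(E+\iota 0)=-1/\lambda\}$ is a $\mu_{\mathrm{sing}}$-null set, since $\mu_{\mathrm{sing}}=(\mu_0)_{\mathrm{sing}}$ lives where $\Im F_\mu(E+\iota\epsilon)\to\infty$. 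So $\mu_{\mathrm{sing}}$ is \emph{not} recovered as the union of those carriers; that union misses the set you care about entirely. Correspondingly, the spectral-averaging identity $\int\mu_\lambda(\cdot)\,d\lambda=$ Lebesgue measure only converts Lebesgue-null sets into $\mu_\lambda$-null sets for \emph{almost every} $\lambda$, and by itself says nothing about the fixed value $\lambda=0$ at which the conclusion is needed.

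The missing ingredient that makes the Jak\v{s}i\'{c}--Last route work is the exact invariance of the ratio along the rank-one flow: with $\psi=f(H)\phi$ (so $F_{f\mu}(z)=\dprod{\phi}{(H-z)^{-1}\psi}$), the resolvent identity together with $F_{\mu_\lambda}=F_\mu/(1+\lambda F_\mu)$ gives
$$\frac{F_{f\mu}(z)}{F_\mu(z)}=\frac{\dprod{\phi}{(H_\lambda-z)^{-1}\psi}}{\dprod{\phi}{(H_\lambda-z)^{-1}\phi}}\qquad\text{for all }\lambda\in\RR,\ z\in\CC^{+},$$
and it is this identity, combined with the a.e.-$\lambda$ information from spectral averaging and de la Vall\'ee Poussin's theorem, that transports the boundary-value statement back to $\lambda=0$; without it the Fubini argument has nothing to act on. Two further soft spots: "reducing to bounded $f$ by truncation" silently invokes a weak-type $(1,1)$ bound for the $\mu$-normalized Cauchy transform, itself a deep theorem of Poltoratskii --- the standard workaround is the change of measure $d\tilde\mu=(1+|f|)\,d\mu$, which reduces $L^1$ to the bounded case with no maximal inequality. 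The polar-decomposition reduction to positive $\mu$, and the Poisson-kernel control of the imaginary part at $\mu$-Lebesgue points, are fine as stated.
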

A proof can be found in \cite{JL3}. This theorem will be used for proof of equivalence of measure for the singular part in lemma \ref{SingPtSpec1} and corollary \ref{SingLem3}.
%%%%%%%%%%%%%%%%%%%%%%%%%%%%%%%%%%%%%%%%%%%%%%%%%%%%%%%%%%%%%%%%%%%%%%%%%%%%%%%%%%%%%%%%%%%%%%%%%%%%%%%%%%%%%%%%%%%%%%%%%%%%%%%%%%%%%%%%%%%%%%%%%%%%%%%%%%%%%%%%%%%%%%%%%%%%%
\section{Perturbation by finite Rank Projection}
In this section we will be working with $(A,\Hi,\{P_i\}_{i=1}^3)$, where $A$ is a self adjoint operator on the Hilbert space $\Hi$, and $\{P_i\}_{i=1}^3$ are three rank $N$ projections. We will work with the case that the measures $tr(P_iE_A(\cdot)P_i)$ are not equivalent to lebesgue measure (hence using Riesz theorem \cite{RR}, the Borel transform of these measures are non-zero on the upper half plane). Define $A_\mu=A+\mu P_1$, $G_{ij}(z)=P_i(A-z)^{-1}P_j$ and $G^\mu_{ij}(z)=P_i(A_\mu-z)^{-1}P_j$ for $i,j=1,2,3$ and $z\in\CC^{+}$, and will use the notation
$$g(E+\iota0):=\lim_{\epsilon\downarrow0}g(E+\iota\epsilon)$$
for $E\in\RR$ (whenever the limit exists). Using the relation $A^{-1}-B^{-1}=B^{-1}(B-A)A^{-1}=A^{-1}(B-A)B^{-1}$, we have
\begin{align}
&G^\mu_{11}(z)=G_{11}(z)(I+\mu G_{11}(z))^{-1}\label{MHEq1}\\
&(I+\mu G_{11}(z))(I-\mu G^\mu_{11}(z))=I\label{MHEq3}\\
&G^\mu_{ij}(z)=G_{ij}(z)-\mu G_{i1}(z)(I+\mu G_{11}(z))^{-1}G_{1j}(z)\qquad(i,j)\neq(1,1)\label{MHEq2}
\end{align}
For any $E\in\RR$ such that $G_{11}(E+\iota0)$ exists and finite,  and $f:(0,\infty)\rightarrow\CC$ be such that $\lim_{\epsilon\rightarrow 0}f(\epsilon)=0$ look at $\lim_{\epsilon\downarrow 0} f(\epsilon) G_{11}^\mu(E+\iota\epsilon)$ using equation \eqref{MHEq3}
\begin{align*}
&\lim_{\epsilon\downarrow 0} f(\epsilon)(I-\mu G^\mu_{11}(E+\iota\epsilon))(I+\mu G_{11}(E+\iota\epsilon))-f(\epsilon) I=0\\
&(I+\mu G_{11}(E+\iota 0))\left(\lim_{\epsilon\downarrow 0}f(\epsilon) G^\mu_{11}(E+\iota\epsilon)\right)=0
\end{align*}
So we get
\begin{equation}\label{RkNuEq1}
range\left(\lim_{\epsilon\downarrow 0}f(\epsilon) G^\mu_{11}(E+\iota\epsilon)\right)\subseteq ker(I+\mu G_{11}(E+\iota 0))\subseteq ker(\Im G_{11}(E+\iota 0))
\end{equation}
Since $\Im G_{11}(E+\iota 0)\geq 0$ it decomposes the space $P_1\Hi=ker(\Im G_{11}(E+\iota 0))\oplus ker(\Im G_{11}(E+\iota 0))^\bot$ with $range(\Im G_{11}(E+\iota 0))=ker(\Im G_{11}(E+\iota 0))^\bot$, so on $ker(\Im G_{11}(E+\iota 0))^\bot$ we have $\Im G_{ii}(E+\iota 0)>0$. This fact will be used in identifying appropriate subspaces.
We will need some preliminary results before we attempt to prove our main results. The Following lemma  relates the invertibility of the matrices $G^\mu_{12}(z)$ with the ranks of $Q_1P_2$ and $P_2$.
\begin{lemma}\label{InvLem1}
Let $A$ be a self-adjoint operator on a Hilbert space $\Hi$ and  $P_1$ and $P_2$ be two projections of rank $N$. Let $\Hi_i$ denote the cyclic subspace generated by $A$ and $P_i\Hi$ and $Q_i:\Hi\rightarrow\Hi_i$ be the canonical projection onto that subspace, for $i=1,2$. If $Q_1P_2$ has same rank as $P_2$, then $P_1(A-z)^{-1}P_2$ is almost surely invertible for a.e $z\in\CC^{+}$.
\end{lemma}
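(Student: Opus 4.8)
The plan is to reduce the a.e.\ invertibility to a single non-vanishing statement for an analytic determinant, and then to extract that non-vanishing from the rank hypothesis via Lemma \ref{NonZeroTms1}. Write $G_{12}(z)=P_1(A-z)^{-1}P_2$, regarded as an $N\times N$ matrix after fixing bases of $P_1\Hi$ and $P_2\Hi$. Since $z\mapsto(A-z)^{-1}$ is analytic on the connected set $\CC^+$, the scalar function $z\mapsto\det G_{12}(z)$ is analytic there; by the identity theorem it is either identically zero or has a discrete (hence Lebesgue-null) zero set. Consequently \emph{$G_{12}(z)$ is invertible for a.e.\ $z\in\CC^+$ if and only if $\det G_{12}\not\equiv 0$}, i.e.\ if and only if $G_{12}(z_0)$ is injective for some single $z_0\in\CC^+$. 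So the whole content is to rule out $\det G_{12}\equiv 0$.

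First I would record two reformulations. Because $P_1\Hi\subseteq\Hi_1$ we have $P_1=Q_1P_1=P_1Q_1$, and because $\Hi_1$ reduces $A$ the projection $Q_1$ commutes with $(A-z)^{-1}$; hence
$$G_{12}(z)=P_1(A-z)^{-1}Q_1P_2,$$
so $G_{12}$ factors through $Q_1P_2$. Second, the hypothesis $\mathrm{rank}\,Q_1P_2=\mathrm{rank}\,P_2=N$ is equivalent to $Q_1|_{P_2\Hi}$ being injective, i.e.\ to $P_2\Hi\cap\Hi_1^\perp=\{0\}$: no nonzero $\psi\in P_2\Hi$ is orthogonal to the cyclic subspace $\Hi_1$. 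Combined with $\Hi_1=\overline{\bigvee_{\phi\in P_1\Hi}\Hi_\phi}$, this says that for every nonzero $\psi\in P_2\Hi$ there is a $\phi\in P_1\Hi$ with $\Hi_\phi\not\perp\Hi_\psi$, which by Lemma \ref{NonZeroTms1} forces $\dprod{\phi}{(A-E-\iota0)^{-1}\psi}\neq 0$ for a.e.\ $E$; in particular no fixed nonzero $\psi\in P_2\Hi$ can lie in $\ker G_{12}(z)$ for all $z$.

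Thus if $\det G_{12}\equiv0$, the kernel of $G_{12}(z)$ is nontrivial for every $z$ but cannot be spanned by a single $z$-independent vector in $P_2\Hi$; I would therefore pass, as one may for an analytic matrix function of non-full generic rank, to an analytic kernel section $z\mapsto\psi(z)\in P_2\Hi\setminus\{0\}$ with $G_{12}(z)\psi(z)\equiv0$, and push this relation (using the factorization above and the resolvent identity to compare different values of $z$) back to a fixed orthogonality against $\Hi_1$, contradicting $P_2\Hi\cap\Hi_1^\perp=\{0\}$. The \textbf{main obstacle} is exactly this last step: $\det G_{12}\equiv0$ produces only a \emph{moving} kernel, and the rank hypothesis on $Q_1P_2$ directly excludes a kernel vector on the $P_2\Hi$ side, so the delicate point is to convert the moving kernel into a genuine $z$-independent orthogonality — equivalently, to rule out the left-sided degeneracy of $G_{12}$. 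This is where the cyclicity of $P_1\Hi$ inside $\Hi_1$ and the a.e.\ non-vanishing of Lemma \ref{NonZeroTms1} do the real work; Lemma \ref{SpecAvaLem} then lets me discard at will the Lebesgue-null sets of energies on which the boundary limits misbehave.
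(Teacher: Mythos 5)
Your reduction to the non-vanishing of the analytic function $\det G_{12}$, and your reformulation of the hypothesis as $P_2\Hi\cap\Hi_1^{\perp}=\{0\}$, are correct, and up to that point you are on the same track as the paper: the paper also fixes a nonzero $\phi\in P_2\Hi$, uses $Q_1\phi\neq 0$ to produce $\psi\in P_1\Hi$ with $\mu_{\psi,\phi}\neq 0$, and invokes Lemma \ref{NonZeroTms1} to conclude $\dprod{\psi}{(A-z)^{-1}\phi}\not\equiv 0$. The divergence is at the end. The paper passes directly from ``for each fixed $\phi\neq 0$ there is $\psi$ with $\dprod{\psi}{(A-z)^{-1}\phi}\not\equiv 0$'' to ``$P_1(A-z)^{-1}P_2$ is an injection,'' while you correctly point out that this only excludes a $z$-independent kernel vector and is still compatible with $\det G_{12}\equiv 0$ via a kernel that moves with $z$. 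But your proposal then stops: the conversion of a moving analytic kernel section into a fixed orthogonality against $\Hi_1$ is announced as a plan and never executed. That unexecuted step is a genuine gap, and it is precisely the content of the lemma beyond what Lemma \ref{NonZeroTms1} already provides.

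The more serious point is that this step cannot be carried out from the stated one-sided hypothesis, so the obstacle you flag is not a technicality. Take $\Hi=L^2([0,1])\oplus L^2([0,1])$ with $A$ multiplication by $x$ on each summand; let $P_1$ project onto the span of $u_1=(1,0)$ and $u_2=(0,1)$ (with $1$ the constant function), and $P_2$ onto the span of $v_1=\frac{1}{\sqrt2}(a,a)$ and $v_2=\frac{1}{\sqrt2}(c,c)$, where $a,c$ are orthonormal in $L^2[0,1]$ (choosing them orthogonal to the constants even makes $P_1\Hi\perp P_2\Hi$). Then $\Hi_1=\Hi$, so $Q_1P_2=P_2$ has rank $2$; yet $P_1(A-z)^{-1}(\alpha v_1+\beta v_2)=\frac{1}{\sqrt2}\bigl(\alpha F_a(z)+\beta F_c(z)\bigr)(u_1+u_2)$ with $F_a(z)=\int_0^1 a(x)(x-z)^{-1}\,dx$, so $G_{12}(z)$ has rank at most one for every $z\in\CC^{+}$ and is never invertible, even though no fixed $(\alpha,\beta)\neq(0,0)$ lies in its kernel off a discrete set of $z$ (since $F_a,F_c$ are linearly independent analytic functions). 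The obstruction is a fixed \emph{cokernel} vector: $u_1-u_2\perp\Hi_2$, i.e.\ $\mathrm{rank}(Q_2P_1)<\mathrm{rank}(P_1)$. So any correct argument must also use the symmetric condition on $Q_2P_1$ (which is what the paper actually assumes, via $E_{n,m}\cap E_{m,n}$, in parts (2) and (3) of Theorem \ref{MainThm}); your proof cannot be completed as written, and the paper's own proof, which otherwise coincides with yours, silently jumps over exactly the step you identified as the main obstacle.
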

\begin{proof}
Let $\phi\in P_2\Hi\setminus\{0\}$. Since $Q_1P_2$ has same rank as $P_2$, we have $0\neq Q_1\phi\in\Hi_1$ (if it is zero, then $ker(Q_1)\cap P_2\Hi\neq\{0\}$ and so $rank(Q_1 P_2)<rank(P_2)$),  so there is $\psi\in P_1\Hi$ and $f\in L^2(\RR,d\mu_\psi)$ such that $Q_1\phi=f(A)\psi$. So
\begin{align*}
0\neq\dprod{Q_1\phi}{Q_1\phi}=\dprod{\psi}{f^\ast(A)Q_1\phi}=\dprod{\psi}{f^\ast(A)\phi}=\int \bar{f}(x)d\mu_{\psi,\phi}(x)
\end{align*}
since $Q_1$ commutes with any functions of $A$. So the measure $\mu_{\psi,\phi}$ is non-zero, hence the Borel transform
$$\int\frac{d\mu_{\psi,\phi}(x)}{x-z}=\dprod{\psi}{(A-z)^{-1}\phi},$$
is almost surely non-zero on $\CC^{+}$.

So for each vector $\phi\in P_2\Hi$ there exists a $\psi\in P_1\Hi$ such that $\dprod{\psi}{(A-z)^{-1}\phi}$ is non-zero, in other words $P_1(A-z)^{-1}P_2$ is an injection, and since $P_1(A-z)^{-1}P_2$ is an $n\times n$ matrix we get invertibility.

\end{proof}
\begin{rem}
The lemma above also assures that for almost all $E$ the matrix valued function $P_1(A-E-\iota 0)^{-1}P_2$ is invertible.

For some $z\in\CC^{+}$, the invertibility of $P_1(A-z)^{-1}P_2$ gives us $Q_1P_2$ has same rank as $P_2$. So by looking at $\det(G_{mn}(z))$ we can obtain a statement about non-orthogonality of the subspace $\{\Hi_i\}_{i=1,2}$.
\end{rem}
Choose a basis of $P_i\Hi$, then $G_{ij}(z)$ is a matrix in that basis. We can write
\begin{equation}\label{WorkSet1}
\hspace{-0.3cm}S=\{E\in\RR|\ \text{Entries of $G_{ij}(E+\iota 0)$ exists and are finite } \forall i,j=1,2,3\}
\end{equation}
Then by lemma \ref{NonZeroTms1} we know that $S$ has full measure. Define 
\begin{equation}\label{WorkSet2}
S_{ij}=\{E\in S| G_{ij}(E+\iota 0) \text{ is invertible}\}\qquad\forall i,j=1,2,3
\end{equation}
By lemma \ref{InvLem1}, $S_{ij}$ has full measure whenever $Q_iP_j$ has same rank as $P_j$.
\begin{rem}\label{InvRem1}
On the set $S$, the limit $G_{11}(E+\iota 0)$ exists, and since $\det(I+\mu G_{11}(E+\iota 0))=1+\sum_{i=1}^N a_i(E)\mu^i$, using lemma \ref{MbleVar1} for almost all $\mu$ the matrix $I+\mu G_{11}(E+\iota 0)$ is invertible on a set of full measure. 
\end{rem}

\begin{lemma}\label{InvLem2}
Let $A$ be self adjoint operator on Hilbert space $\Hi$ and $\{P_i\}_{i=1}^3$ be rank $N$ projections. Define $A_\mu=A+\mu P_1$, $G_{ij}(z)=P_i(A-z)^{-1}P_j$ and $G^\mu_{ij}(z)=P_i(A_\mu-z)^{-1}P_j$. If $G_{23}(E+\iota 0)$ is invertible for a.e $E$, then $G^\mu_{23}(E+\iota 0)$ is also invertible for a.e $(E,\mu)$.
\end{lemma}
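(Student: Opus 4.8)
The plan is to reduce the invertibility of $G^\mu_{23}(E+\iota 0)$ to the non-vanishing of a single determinant, viewed as a rational function of $\mu$, and then to invoke the countability result of Lemma \ref{MbleVar1} together with Fubini's theorem. The starting point is the resolvent identity \eqref{MHEq2} with $(i,j)=(2,3)$, namely
$$G^\mu_{23}(z)=G_{23}(z)-\mu G_{21}(z)(I+\mu G_{11}(z))^{-1}G_{13}(z),$$
whose boundary values I would take as $z\to E+\iota 0$. On the full-measure set $S$ of \eqref{WorkSet1} all the limits $G_{ij}(E+\iota 0)$ exist and are finite (by Lemma \ref{NonZeroTms1}), and by Remark \ref{InvRem1} the matrix $I+\mu G_{11}(E+\iota 0)$ is invertible for a.e.\ $(E,\mu)$; on this set the displayed identity is valid.

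Next I would clear the denominator with the adjugate, writing $(I+\mu G_{11})^{-1}=d(\mu,E)^{-1}\,\mathrm{adj}(I+\mu G_{11})$ where $d(\mu,E)=\det(I+\mu G_{11}(E+\iota 0))=1+\sum_{i=1}^N a_i(E)\mu^i$. This expresses $G^\mu_{23}(E+\iota 0)=d(\mu,E)^{-1}M(\mu,E)$, where $M(\mu,E)=d(\mu,E)G_{23}(E+\iota 0)-\mu G_{21}(E+\iota 0)\,\mathrm{adj}(I+\mu G_{11}(E+\iota 0))\,G_{13}(E+\iota 0)$ is a matrix whose entries are polynomials in $\mu$ with $E$-measurable coefficients. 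Taking determinants,
$$\det G^\mu_{23}(E+\iota 0)=\frac{p(\mu,E)}{d(\mu,E)^N},\qquad p(\mu,E):=\det M(\mu,E),$$
so that $p$ is a polynomial in $\mu$. The key observation is that setting $\mu=0$ gives $d(0,E)=1$ and $M(0,E)=G_{23}(E+\iota 0)$, whence $p(0,E)=\det G_{23}(E+\iota 0)$.

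Now the hypothesis enters: $G_{23}(E+\iota 0)$ is invertible for a.e.\ $E$, so $p(0,E)\neq 0$ on a set $X_0\subseteq S$ of full measure. On $X_0$ I would normalize, setting $\tilde p(\mu,E):=p(\mu,E)/p(0,E)=1+\sum_{n\geq 1}c_n(E)\mu^n$, a (finite) polynomial of exactly the form treated in Lemma \ref{MbleVar1} (equivalently, $\det G^\mu_{23}=0$ is of the rational type covered by Corollary \ref{MbleVar2}). Applying Lemma \ref{MbleVar1} with the measure space $(X_0,\mathrm{Leb})$ shows that $\Lambda=\{\mu\in\CC\mid |\{E\in X_0: \tilde p(\mu,E)=0\}|>0\}$ is countable. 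Since on $X_0$, and where $d(\mu,E)\neq 0$, the vanishing of $\det G^\mu_{23}(E+\iota 0)$ is equivalent to $\tilde p(\mu,E)=0$, Tonelli's theorem gives
$$\bigl|\{(E,\mu): \det G^\mu_{23}(E+\iota 0)=0\}\bigr|=\int_\RR \bigl|\{E: \det G^\mu_{23}(E+\iota 0)=0\}\bigr|\,d\mu=0,$$
because the inner measure is positive only for $\mu\in\Lambda$, a countable (hence null) set. Thus $G^\mu_{23}(E+\iota 0)$ is invertible for a.e.\ $(E,\mu)$.

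I expect the main difficulty to be bookkeeping rather than conceptual: one must keep track of the distinction between ``a.e.\ $E$'' and ``a.e.\ $(E,\mu)$'' and verify that the exceptional $(E,\mu)$ where the boundary limits fail, where $d(\mu,E)=0$, or where $p(0,E)=0$ together form a genuinely null set in the product, so that discarding them is harmless; the passage between the two notions is precisely the Tonelli step above. Joint measurability in $(E,\mu)$ is automatic, since everything is assembled from the $E$-measurable entries $G_{ij}(E+\iota 0)$ by polynomial operations in $\mu$.
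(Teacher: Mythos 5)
Your proposal is correct and follows essentially the same route as the paper: take boundary values in the resolvent identity \eqref{MHEq2}, observe that $\det G^\mu_{23}(E+\iota 0)$ is a rational function of $\mu$ whose numerator is a polynomial with constant term $\det G_{23}(E+\iota 0)\neq 0$ for a.e.\ $E$, and conclude via Lemma \ref{MbleVar1}/Corollary \ref{MbleVar2}. Your write-up is in fact slightly more careful than the paper's, making explicit the adjugate computation (hence the correct power of the denominator), the normalization by $p(0,E)$ needed to fit the hypotheses of Lemma \ref{MbleVar1}, and the final Tonelli step from countably many exceptional $\mu$ to a null set in $(E,\mu)$.
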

\begin{proof}
From equations \eqref{MHEq1} and \eqref{MHEq2} and remark \ref{InvRem1} we get
$$G^\mu_{23}(E+\iota 0)=G_{23}(E+\iota 0)-\mu G_{21}(E+\iota 0)(I+\mu G_{11}(E+\iota 0))^{-1}G_{13}(E+\iota 0)$$
since we are only looking for invertibility, looking at determinant is enough. And so 
$$\det(G^\mu_{23}(E+\iota 0))=\frac{\det(G_{23}(E+\iota 0))+\sum a_n(E)\mu^n}{\det(I+\mu G_{11}(E+\iota 0))}$$
Again by corollary \ref{MbleVar2} we get that for almost all $\mu$ the matrix $G_{23}(E+\iota 0)$ is invertible on a set of full measure.

\end{proof}
Next lemma provide the relation between the absolute continuous component of the measure.
\begin{lemma}\label{AbsPtSpec1}
On Hilbert space $\Hi$ we have two rank $N$ projections $P_1,P_2$ and a self adjoint operator $A$. Set $A_\mu=A+\mu P_1$, $G_{ij}(z)=P_i(A-z)^{-1}P_j$ and $G_{ij}^\mu(z)=P_i(A_\mu-z)^{-1}P_j$; set $S$ and $S_{12}$ as \eqref{WorkSet1},\eqref{WorkSet2}. Define 
$$V_{E,i}^\mu=ker(\Im G_{ii}^\mu(E+\iota 0))^{\bot}$$
for each $E\in S\cap\{x\in\RR|\ \lim_{\epsilon\downarrow0} G_{11}^\mu(x+\iota\epsilon)\ \text{exists and finite}\}$. Assume $S_{12}$ has full measure. Then for a.e $\mu$
$$(G_{12}(E+\iota 0))^{-1}:V_{E,1}^\mu\rightarrow V_{E,2}^\mu$$ 
is injective, and 
$$(I+\mu G_{11}(E+\iota 0)):V_{E,1}^0\rightarrow V_{E,1}^\mu$$
is isomorphism.
\end{lemma}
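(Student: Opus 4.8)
The plan is to treat the two assertions separately, with the diagonal (second) statement serving as the warm-up and the off-diagonal (first) statement carrying the real content. Throughout I would work first at an interior point $z\in\CC^{+}$, where $\Im G^\mu_{ii}(z)=\epsilon\,P_i(A_\mu-\bar z)^{-1}(A_\mu-z)^{-1}P_i\ge 0$ and every matrix in sight is boundedly invertible, and only afterwards pass to the boundary value $E+\iota 0$. That passage is legitimate for a.e.\ $E$ on the full-measure set $S$ of \eqref{WorkSet1} (Lemma \ref{NonZeroTms1}), while the invertibility of $I+\mu G_{11}(E+\iota 0)$ is arranged for a.e.\ pair $(E,\mu)$ by Remark \ref{InvRem1}. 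At each step the set of $\mu$ for which the relevant determinant vanishes on a positive-measure set of $E$ is countable by Lemma \ref{MbleVar1} and Corollary \ref{MbleVar2}, so the whole argument reduces, by Fubini, to the assertion: for a.e.\ $\mu$ the identities below hold for a.e.\ $E$.

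For the second statement I would start from \eqref{MHEq3}, which gives $I-\mu G^\mu_{11}=(I+\mu G_{11})^{-1}$ and hence $G^\mu_{11}=\mu^{-1}\!\left(I-(I+\mu G_{11})^{-1}\right)$. Writing $C=I+\mu G_{11}$ and using $C^{*}-C=-2\iota\mu\,\Im G_{11}$ together with the fact that $C$ commutes with $G_{11}$, a direct computation yields the congruence identity
\[
\Im G^\mu_{11}(z)=C^{-1}\,\bigl(\Im G_{11}(z)\bigr)\,(C^{-1})^{*}.
\]
Since congruence by the invertible matrix $C^{-1}$ carries ranges onto ranges, this shows $V^\mu_{E,1}=\operatorname{range}(\Im G^\mu_{11})=C^{-1}\operatorname{range}(\Im G_{11})=C^{-1}V^{0}_{E,1}$, so that $I+\mu G_{11}(E+\iota 0)$ (equivalently its inverse) restricts to the claimed isomorphism between $V^{0}_{E,1}$ and $V^{\mu}_{E,1}$; rank preservation is automatic.

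For the first statement the guiding identity is \eqref{MHEq2} with $(i,j)=(1,2)$, which simplifies to $G^\mu_{12}=C^{-1}G_{12}$; thus $G_{12}$ and $G^\mu_{12}$ are invertible on exactly the same set (using the $S_{12}$ hypothesis and a.e.\ invertibility of $C$), and $(G_{12})^{-1}=(G^\mu_{12})^{-1}C^{-1}$. The structural input I would exploit is that at an interior point the full block matrix $\bigl(\Im G^\mu_{ij}(z)\bigr)_{i,j=1,2}=\epsilon\,(\Phi_i^{*}\Phi_j)_{i,j}$, with $\Phi_i=\sqrt{\epsilon}\,(A_\mu-z)^{-1}P_i$, is a Gram matrix and hence positive semidefinite, and the same holds for the boundary values a.e. For such a positive block matrix the off-diagonal range is controlled by the diagonal ranges, which links $V^\mu_{E,1}$ and $V^\mu_{E,2}$ through the off-diagonal entry, and invertibility of $G_{12}$ upgrades the containment to an injection. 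To make this explicit I would track how the two fibres transform under the perturbation via the resolvent expansion of $(A_\mu-z)^{-1}$, which gives $\Phi^\mu_1=\Phi_1 C^{-1}$ and $\Phi^\mu_2=\Phi_2-\mu\,\Phi^\mu_1 G_{12}$; the second relation exhibits exactly how the coupling $G_{12}$ feeds the index-$1$ data into the index-$2$ fibre.

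The step I expect to be the main obstacle is precisely this last one. Unlike the diagonal blocks, the off-diagonal boundary value $G_{12}(E+\iota 0)$ carries a nonzero real (Hilbert-transform) part, so positivity of the imaginary-part block constrains only imaginary parts and does not by itself say that $(G_{12})^{-1}$ respects the subspaces $V^\mu_{E,i}$; indeed, for $\mu=0$ the statement can fail, since $G_{12}$ may be invertible while $\Im G_{22}$ vanishes where $\Im G_{11}$ does not. The perturbation must therefore be used essentially: the term $-\mu\,\Phi^\mu_1 G_{12}$ in $\Phi^\mu_2$ shows that for $\mu\neq 0$ the coupling forces the a.c.\ density of index $1$ to appear in the index-$2$ fibre, which is why the conclusion is asserted only for a.e.\ $\mu$. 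Making this quantitative — showing that for a.e.\ $(E,\mu)$ the strict positivity of $\Im G^\mu_{11}$ on $V^\mu_{E,1}$ together with invertibility of $G_{12}$ forces $(G_{12})^{-1}v$ to have a nonvanishing component in $V^\mu_{E,2}$ for every nonzero $v\in V^\mu_{E,1}$, and then discarding the exceptional energies via Corollary \ref{MbleVar2} — is the crux of the argument.
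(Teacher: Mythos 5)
Your handling of the second assertion is correct and coincides with the paper's: the paper computes $\dprod{(I+\mu G_{11})v}{\Im G_{11}^\mu(E+\iota 0)\,(I+\mu G_{11})v}=\dprod{v}{\Im G_{11}(E+\iota 0)v}$ vector by vector, which is exactly your congruence identity for $\Im G_{11}^\mu$ read off on $V_{E,1}^0$, so nothing is lost there.

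The first assertion, however, is left with a genuine gap, and you flag it yourself: you isolate the claim that $(G_{12}(E+\iota 0))^{-1}v$ must have a nonvanishing component in $V^\mu_{E,2}$ as ``the crux'' and do not carry it out. That step is the entire content of the lemma, and the paper closes it with a short argument whose ingredients you already have on the table but did not assemble. Put $\phi=G_{12}(E+\iota 0)^{-1}v$ for $0\neq v\in V^\mu_{E,1}$ and expand, via \eqref{MHEq2} and \eqref{MHEq3},
$$G_{22}^\mu=G_{22}-\mu\, G_{21}G_{12}+\mu^2\, G_{21}G_{11}^\mu G_{12}.$$
Since $\Im G_{22}(E+\iota 0)\geq 0$, the case needing work is $\Im G_{22}(E+\iota 0)\phi=0$. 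There the positivity of the full block Herglotz matrix enters not through a range inclusion but through its consequence \eqref{eq4}: vanishing of $\Im G_{22}$ at $\phi$ forces $G_{12}\phi=G_{21}^{\ast}\phi$. This single identity does two things at once: it kills the cross term, since $\dprod{\phi}{\Im(G_{21}G_{12})\phi}=\Im\dprod{G_{12}\phi}{G_{12}\phi}=0$, and it converts the $\mu^2$ term into $\mu^{2}\dprod{G_{12}\phi}{(\Im G_{11}^\mu(E+\iota 0))G_{12}\phi}=\mu^{2}\dprod{v}{\Im G_{11}^\mu(E+\iota 0)v}>0$. Hence for $\mu\neq 0$ one gets $\dprod{\phi}{\Im G_{22}^\mu(E+\iota 0)\phi}>0$, i.e. $\phi\notin\ker\Im G_{22}^\mu(E+\iota 0)$, which is the required injectivity. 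Your observation that the statement can fail at $\mu=0$ is correct and is visible here: the saving term is precisely the $\mu^2$ one. Your worry that positivity of the imaginary-part block ``constrains only imaginary parts'' is answered exactly by \eqref{eq4} — on the kernel of $\Im G_{22}$ the off-diagonal blocks are pinned to each other's adjoints, which is the control on $G_{12}$ the argument needs. Without this step your proposal establishes the setup and the correct heuristics, but not the lemma.
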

\begin{proof}
From the equation \eqref{MHEq2} and \eqref{MHEq3} we get
$$G^\mu_{22}(z)=G_{22}(z)-\mu G_{21}(z)G_{12}(z)+\mu^2 G_{21}(z)G_{11}^\mu(z)G_{12}(z)$$
For $E\in S\cap \{x\in\RR|\ \lim_{\epsilon\downarrow0} G_{11}^\mu(x+\iota\epsilon)\ \text{exists and finite}\}$, let $v\in V_{E,1}^\mu$, and set $\phi=(G_{12}(E+\iota 0))^{-1}v$, observe (every quantity in RHS below exists and finite so limit can be taken)
\begin{align*}
\lim_{\epsilon\downarrow 0}\dprod{\phi}{(\Im G_{22}^\mu(E+\iota\epsilon))\phi}&=\lim_{\epsilon\downarrow 0} \left[ \dprod{\phi}{(\Im G_{22}(E+\iota\epsilon))\phi}-\mu\dprod{\phi}{\Im( G_{21}(E+\iota\epsilon)G_{12}(E+\iota \epsilon))\phi}\right.\\
&\qquad+\left.\mu^2\dprod{\phi}{(\Im G_{21}(E+\iota\epsilon)G^\mu_{11}(E+\iota \epsilon)G_{12}(E+\iota \epsilon))\phi}\right]
\end{align*}
Since $\Im G_{22}^\mu(E+\iota 0)$ is positive matrix, looking at $\dprod{\phi}{(\Im G_{22}^\mu(E+\iota 0))\phi}$ is enough.

If $\dprod{\phi}{(\Im G_{22}(E+\iota0))\phi}=0$ which implies $(\Im G_{22}(E+\iota0))\phi=0$ so using \eqref{eq4} we have $G_{12}(E+\iota 0)\phi=G_{21}^\ast(E+\iota 0)\phi$, and so
\begin{align*}
\lim_{\epsilon\downarrow 0}\dprod{\phi}{(\Im G_{22}^\mu(E+\iota\epsilon))\phi}&=\mu^2\dprod{G_{12}(E+\iota 0)\phi}{(\Im G^\mu_{11}(E+\iota 0))G_{12}(E+\iota 0)\phi}\\
&\qquad-\mu\dprod{\phi}{\Im( G_{21}(E+\iota0)G_{12}(E+\iota 0))\phi}\\
&=\mu^2\dprod{v}{(\Im G_{11}^\mu(E+\iota 0))v}
\end{align*}
So $\phi\in V^\mu_{E,2}$ and hence $G_{12}(E+\iota 0)^{-1}$ gives the injection.

For the other assertion, let $v\in V_{E,1}^0$ observe
$$\dprod{v}{(I+\mu G_{11}(E+\iota 0))v}=\norm{v}_2^2+\mu(\dprod{v}{\Re G_{11}(E+\iota 0)v}+\iota \dprod{v}{\Im G_{11}(E+\iota 0)v})$$
since $\dprod{v}{\Im G_{11}(E+\iota 0)v}\neq 0$, so the above equation cannot be zero for any $\mu\in \RR$. So on $V_{E,1}^0$ the operator $(I+\mu G_{11}(E+\iota 0))$ is invertible. Set $\phi=(I+\mu G_{11}(E+\iota 0))v$, observe
\begin{align*}
\hspace{-1cm}\lim_{\epsilon\rightarrow 0}\dprod{\phi}{(\Im G_{11}^\mu(E+\iota\epsilon))\phi}&=\lim_{\epsilon\rightarrow 0}\dprod{\phi}{\Im (G_{11}(E+\iota\epsilon)(I+\mu G_{11}(E+\iota\epsilon))^{-1})\phi}\\
\hspace{-1cm}&=\dprod{(I+\mu G_{11}(E+\iota0))^{-1}\phi}{(\Im G_{11}(E+\iota 0))(I+\mu G_{11}(E+\iota0))^{-1}\phi}\\
\hspace{-1cm}&=\dprod{v}{(\Im G_{11}(E+\iota 0))v}\neq0 
\end{align*}
This gives the isomorphism $(I+\mu G_{11}(E+\iota 0)):V_{E,1}^0\rightarrow V_{E,1}^\mu$.
\end{proof}
This only gives the injection between the absolutely continuous spectral subspaces. One cannot expect more from this setting. By a second perturbation we obtain an isomorphism, which is attained in the next corollary.
\begin{cor}\label{AbsEqu1}
Let $A$ be self adjoint operator on Hilbert space $\Hi$, and $P_1,P_2$ are two rank $N$ projections. Set $A_\mu=A+\mu_1 P_1+\mu_2 P_2$ and $G_{ij}(z)=P_i(A-z)^{-1}P_j$, $G^{\mu_1,\mu_2}_{ij}(z)=P_i(A_{\mu_1,\mu_2}-z)^{-1}P_j$ for $i,j=1,2$ and define the vector space
$$V_{E,i}^{\mu_1,\mu_2}=ker(\Im G_{ii}^{\mu_1,\mu_2}(E+\iota 0))^\bot$$
for each $E\in S\cap \{x\in\RR|\ \lim_{\epsilon\downarrow0} G_{ii}^{\mu_1,\mu_2}(x+\iota\epsilon)\ \text{exists and finite for }i=1,2\}$. Assume $S_{12}$, $S_{21}$ have full measure. Then for a.e $\mu_1,\mu_2$ the two vector space $V_{E,1}^{\mu_1,\mu_2}$ and $V_{E,2}^{\mu_1,\mu_2}$ are isomorphic.
\end{cor}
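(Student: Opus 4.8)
The plan is to realize the doubly perturbed operator as a single perturbation of an already-perturbed operator and then invoke Lemma \ref{AbsPtSpec1} twice, once for each index. Write $B=A+\mu_2 P_2$, so that $A_{\mu_1,\mu_2}=B+\mu_1 P_1$. Applying the first part of Lemma \ref{AbsPtSpec1} to the base operator $B$ with perturbation $\mu_1 P_1$ will produce an injection of $V_{E,1}^{\mu_1,\mu_2}$ into $V_{E,2}^{\mu_1,\mu_2}$, and hence $\dim V_{E,1}^{\mu_1,\mu_2}\le\dim V_{E,2}^{\mu_1,\mu_2}$; interchanging the roles of $P_1$ and $P_2$ (writing instead $B'=A+\mu_1 P_1$ and perturbing by $\mu_2 P_2$) will give the opposite inequality. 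Since both $V_{E,1}^{\mu_1,\mu_2}$ and $V_{E,2}^{\mu_1,\mu_2}$ are subspaces of the $N$-dimensional space $P_i\Hi$, equality of dimensions yields the desired isomorphism.

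To invoke Lemma \ref{AbsPtSpec1} for the base operator $B$, I must first check its hypothesis, namely that the mixed resolvent $P_1(B-E-\iota 0)^{-1}P_2$ is invertible for a.e. $E$. The resolvent identity gives $(B-z)^{-1}=(A-z)^{-1}-\mu_2 (A-z)^{-1}P_2(B-z)^{-1}$, and compressing between $P_1$ and $P_2$ together with the two-projection analogue of \eqref{MHEq3} yields the clean factorization
\begin{equation*}
P_1(B-z)^{-1}P_2=G_{12}(z)\,(I+\mu_2 G_{22}(z))^{-1}.
\end{equation*}
By hypothesis $S_{12}$ has full measure, so $G_{12}(E+\iota 0)$ is invertible for a.e. $E$, while Remark \ref{InvRem1} (an application of Lemma \ref{MbleVar1} to $\det(I+\mu_2 G_{22}(E+\iota 0))$) shows that for a.e. $\mu_2$ the factor $I+\mu_2 G_{22}(E+\iota 0)$ is invertible on a full-measure set of $E$. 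Hence for a.e. $\mu_2$ the product is invertible for a.e. $E$, which is exactly the hypothesis needed to run Lemma \ref{AbsPtSpec1} with $B$ in place of $A$.

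With this in hand, the first part of Lemma \ref{AbsPtSpec1}, applied to $B$ and the rank $N$ projection $P_1$, shows that for a.e. $\mu_1$ the matrix $\bigl(P_1(B-E-\iota 0)^{-1}P_2\bigr)^{-1}$ restricts to an injection of $V_{E,1}^{\mu_1,\mu_2}$ into $V_{E,2}^{\mu_1,\mu_2}$ for a.e. $E$. The symmetric argument, using that $S_{21}$ has full measure and that $I+\mu_1 G_{11}(E+\iota 0)$ is invertible for a.e. $\mu_1$ (so that $P_2(B'-z)^{-1}P_1=G_{21}(z)(I+\mu_1 G_{11}(z))^{-1}$ is invertible), produces the reverse injection $V_{E,2}^{\mu_1,\mu_2}\to V_{E,1}^{\mu_1,\mu_2}$. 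Combining the two gives $\dim V_{E,1}^{\mu_1,\mu_2}=\dim V_{E,2}^{\mu_1,\mu_2}$, whence the two spaces are isomorphic.

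The one genuinely delicate point is the bookkeeping of the almost-everywhere quantifiers across the three variables $\mu_1,\mu_2,E$: each invocation of Lemma \ref{AbsPtSpec1} and of Lemma \ref{MbleVar1} discards a null set in one variable for a.e. value of the others. I would therefore organize the argument with Fubini's theorem on the product measure in $(\mu_1,\mu_2)$, so as to guarantee a single full-measure set of pairs $(\mu_1,\mu_2)$ supporting both injections simultaneously for a.e. $E$. The algebraic content—the factorization of the mixed resolvent and the two applications of Lemma \ref{AbsPtSpec1}—is otherwise routine; the careful intersection of the various exceptional sets is the main thing to get right.
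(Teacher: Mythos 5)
Your proof is correct and follows essentially the same route as the paper: two applications of Lemma \ref{AbsPtSpec1}, one with base operator $A+\mu_2 P_2$ perturbed by $\mu_1 P_1$ and one with the roles reversed, followed by the finite-dimensionality argument. The only cosmetic difference is that you verify the intermediate invertibility via the explicit factorization $G_{12}^{0,\mu_2}(z)=G_{12}(z)(I+\mu_2 G_{22}(z))^{-1}$, whereas the paper cites Lemma \ref{InvLem2} for the same fact.
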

\begin{proof}
This is just application of lemma \ref{AbsPtSpec1}. For $E$ in full measure set we have
\begin{align*}
V_{E,2}^{\mu_1,\mu_2}\hookrightarrow V_{E,1}^{\mu_1,\mu_2}
\end{align*}
where the map is $(G_{21}^{\mu_1,0}(E+\iota 0))^{-1}$. Lemma \ref{InvLem2} tells us $G_{21}^{\mu_1,0}(E+\iota 0)$ is also invertible for almost all $\mu_1$. Now we can do the same thing other way around:
\begin{align*}
V_{E,1}^{\mu_1,\mu_2}\hookrightarrow V_{E,2}^{\mu_1,\mu_2}
\end{align*}
Since we are working in finite dimensional spaces ($V_{E,i}^{\mu_1,\mu_2}$ are finite dimensional), injection in both direction tells us that they are isomorphic.
\end{proof}
%%%%%%%%%%%%%%%%%%%%%%%%%%%%%%%%%%%%%%%%%%%%%%%%%%%%%%%%%%%%%%%%%%%%%%%%%%%%%%%%%%%%%%%
The next lemma is similar to lemma \ref{AbsPtSpec1}, but for the singular part. The conclusion is for subspaces where growth of the Herglotz function is maximum or equivalently its associated measure has lowest (Hausdorff) dimension.   We will use the fact that a matrix valued measure $\Sigma_n(\cdot)=P_n E_A(\cdot) P_n$ is absolutely continuous with respect to the trace measure $\sigma_n(\cdot)=tr(\Sigma_n(\cdot))$ and so $\lim_{\epsilon\downarrow 0} \frac{1}{\sigma_n(E+\iota\epsilon)}\Sigma_n(E+\iota\epsilon)=M(E)$ is $L^1$ w.r.t $\sigma_n$-singular ($\sigma_n(z),\Sigma_n(z)$ are corresponding Borel transform).
\begin{lemma}\label{SingPtSpec1}
On Hilbert space $\Hi$ we have two rank $N$ projections $P_1,P_2$ and a self adjoint operator $A$. Set $A_\mu=A+\mu P_1$, $G_{ij}(z)=P_i(A-z)^{-1}P_j$ and $G_{ij}^\mu(z)=P_i(A_\mu-z)^{-1}P_j$. Set $f_E(\epsilon)=tr(G_{11}^\mu(E+\iota\epsilon))^{-1}$ and $E\in\RR$ be such that $f_E(\epsilon)\xrightarrow{\epsilon\downarrow 0}0$, define
$$\tilde{V}_{E,i}^\mu=ker\left(\lim_{\epsilon\downarrow 0}f_E(\epsilon)G_{ii}^\mu(E+\iota\epsilon)\right)^\bot$$
Assume $S_{12}$ defined as \eqref{WorkSet2} has full measure, then for $E\in S$ such that $f_E(\epsilon)\xrightarrow{\epsilon\downarrow 0}0$  defined as in \eqref{WorkSet1} the map
$$(G_{12}(E+\iota 0))^{-1}: \tilde{V}^\mu_{E,1}\rightarrow \tilde{V}^\mu_{E,2}$$
is injective. So the measure $\sigma_2^\mu$ (where $\sigma^\mu_i(\cdot)=tr\left(P_i E_{A_\mu}(\cdot)P_i\right)$) is absolutely continuous with respect to $\sigma_1^\mu$-singular.
\end{lemma}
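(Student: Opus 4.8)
\emph{Proof proposal.} The plan is to run the argument of Lemma~\ref{AbsPtSpec1} with the plain imaginary part replaced by the $f_E$-weighted boundary limit. First I would expand $G^\mu_{22}$ by means of \eqref{MHEq2} and \eqref{MHEq3}, obtaining
\begin{equation*}
G^\mu_{22}(z)=G_{22}(z)-\mu G_{21}(z)G_{12}(z)+\mu^2 G_{21}(z)G^\mu_{11}(z)G_{12}(z),
\end{equation*}
multiply through by $f_E(\epsilon)$ and let $\epsilon\downarrow 0$. Since $E\in S$ the boundary values $G_{22}(E+\iota 0)$, $G_{21}(E+\iota 0)$ and $G_{12}(E+\iota 0)$ are finite, so the hypothesis $f_E(\epsilon)\to 0$ annihilates the first two terms and only the quadratic term survives:
\begin{equation*}
L_2:=\lim_{\epsilon\downarrow 0}f_E(\epsilon)G^\mu_{22}(E+\iota\epsilon)=\mu^2 G_{21}(E+\iota 0)\,L_1\,G_{12}(E+\iota 0),\qquad L_1:=\lim_{\epsilon\downarrow 0}f_E(\epsilon)G^\mu_{11}(E+\iota\epsilon).
\end{equation*}
The limit $L_1$ exists at $\sigma^\mu_1$-singular almost every $E$ by the fact recorded just before the statement (it is the matrix density $M(E)$ furnished by Poltoratskii's Theorem~\ref{PolThm}), and this forces existence of $L_2$ on that same set.

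Next I would show that $L_1$ and $L_2$ are positive semidefinite Hermitian matrices, so that $\tilde V^\mu_{E,i}=ker(L_i)^\bot=range(L_i)$. For $L_1$ this is immediate: $L_1=M(E)=\tfrac{d\Sigma^\mu_1}{d\sigma^\mu_1}(E)$ is a non-negative matrix of trace one. For $L_2$ the point is that on the singular set the scalar Herglotz function $\sigma^\mu_1(E+\iota\epsilon)=tr(G^\mu_{11}(E+\iota\epsilon))$ satisfies $\Re/\Im\to 0$, whence $f_E(\epsilon)\sim -\iota/\Im\sigma^\mu_1(E+\iota\epsilon)$ and $L_i=\lim \Im G^\mu_{ii}/tr(\Im G^\mu_{11})\ge 0$, realising both $L_i$ as limits of positive matrices. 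In particular $\tilde V^\mu_{E,1}=range(L_1)$ is nontrivial at $\sigma^\mu_1$-singular a.e.\ $E$ since $tr(L_1)=1$.

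The injectivity of $(G_{12}(E+\iota 0))^{-1}\colon\tilde V^\mu_{E,1}\to\tilde V^\mu_{E,2}$ then proceeds as in Lemma~\ref{AbsPtSpec1}, with $L_i$ in place of $\Im G^\mu_{ii}$. Because $S_{12}$ has full measure, Lemma~\ref{InvLem1} makes $G_{12}(E+\iota 0)$ invertible, so the map is defined and globally injective; what must be checked is that it lands in $\tilde V^\mu_{E,2}$. Given $v\in\tilde V^\mu_{E,1}$ put $\phi=(G_{12}(E+\iota 0))^{-1}v$, so that $v=G_{12}(E+\iota 0)\phi$, and compute
\begin{equation*}
\dprod{\phi}{L_2\phi}=\mu^2\dprod{\phi}{G_{21}(E+\iota 0)L_1 v}=\mu^2\dprod{G_{21}(E+\iota 0)^\ast\phi}{L_1 v}.
\end{equation*}
Using the off-diagonal block-positivity relation \eqref{eq4}, which forces $G_{12}(E+\iota 0)\phi=G_{21}(E+\iota 0)^\ast\phi$ on $ker(\Im G_{22}(E+\iota 0))$ and is treated by the same case split as in Lemma~\ref{AbsPtSpec1}, this collapses to $\mu^2\dprod{v}{L_1 v}>0$ for $v\neq 0$, since $v\in range(L_1)=ker(L_1)^\bot$ and $L_1\ge 0$. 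Hence $\phi\notin ker(L_2)$, so $\phi$ has a nonzero component in $\tilde V^\mu_{E,2}$, which gives the injection.

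I expect the main obstacle to be the positivity/Hermiticity step together with the reduction of the off-diagonal product: unlike the absolutely continuous case, the object controlling the kernels is the $f_E$-weighted limit $L_i$ rather than $\Im G^\mu_{ii}$, and one must argue that the real parts are asymptotically negligible on the singular support (so $L_i\ge 0$) and that the a priori non-Hermitian product $G_{21}L_1G_{12}$ genuinely reduces, via \eqref{eq4}, to the Hermitian form $\mu^2\dprod{v}{L_1 v}$; equivalently, that $G_{21}(E+\iota 0)$ is injective on $range(L_1)$. Finally, the measure statement follows from the injection: it yields $L_2\neq 0$, equivalently $\tilde V^\mu_{E,2}\neq\{0\}$, at $\sigma^\mu_1$-singular a.e.\ $E$, so $G^\mu_{22}=\Sigma^\mu_2$ grows at the rate of $\sigma^\mu_1$ along $E+\iota\epsilon$; feeding this into Poltoratskii's Theorem~\ref{PolThm} identifies $tr(L_2)=\tfrac{d\sigma^\mu_2}{d\sigma^\mu_1}$ on the singular set and delivers the asserted absolute continuity of the singular (lowest-dimensional) parts.
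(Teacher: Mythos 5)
Your skeleton matches the paper's: expand $G^\mu_{22}$ via \eqref{MHEq2}, multiply by $f_E(\epsilon)$, note that only the quadratic term survives so that $L_2=\mu^2 G_{21}(E+\iota 0)L_1G_{12}(E+\iota 0)$, and finish with the trace ratio and Poltoratskii. The gap is in the one step you yourself flag as the main obstacle: the reduction of $\dprod{G_{21}(E+\iota 0)^\ast\phi}{L_1v}$ to $\dprod{v}{L_1v}$. You invoke the half of \eqref{eq4} that requires $\phi\in ker(\Im G_{22}(E+\iota 0))$ and propose to dispose of the complementary case by ``the same case split as in Lemma \ref{AbsPtSpec1}''. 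That case split does not transfer to the singular setting: when $\dprod{\phi}{\Im G_{22}(E+\iota 0)\phi}\neq 0$ you learn nothing, because the term $f_E(\epsilon)\,\Im G_{22}(E+\iota\epsilon)$ is annihilated in the limit anyway, and membership of $\phi$ in $\tilde V^\mu_{E,2}$ is governed by the weighted limit $L_2$, not by $\Im G^\mu_{22}$. So your argument only covers $\phi\in ker(\Im G_{22}(E+\iota 0))$, which is not guaranteed.

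The paper closes this with \eqref{RkNuEq1}, which you never use: since $(I+\mu G_{11}(E+\iota 0))L_1=0$, one has $range(L_1)\subseteq ker(I+\mu G_{11}(E+\iota 0))\subseteq ker(\Im G_{11}(E+\iota 0))$, and then the \emph{other} implication in \eqref{eq4} (namely $(\Im A_{11})u=0\Rightarrow A_{21}u=A_{12}^\ast u$) yields $G_{21}(E+\iota 0)L_1=G_{12}(E+\iota 0)^\ast L_1$. Hence $L_2=\mu^2\,G_{12}(E+\iota 0)^\ast L_1 G_{12}(E+\iota 0)$ identically, $\dprod{\phi}{L_2\phi}=\mu^2\dprod{v}{L_1v}$ with no case analysis, and $L_2\phi=\mu^2 G_{12}(E+\iota 0)^\ast L_1v\neq 0$ for $0\neq v\in ker(L_1)^\bot$ by invertibility of $G_{12}(E+\iota 0)$ on $S_{12}$. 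Your auxiliary discussion of positivity and Hermiticity of the $L_i$ is harmless (and does help justify passing from $\phi\notin ker(L_2)$ to $\phi$ having a component in $ker(L_2)^\bot$), but it is not where the difficulty lies and is not what the paper uses; the essential missing ingredient is the localization of $range(L_1)$ inside $ker(\Im G_{11}(E+\iota 0))$ provided by \eqref{RkNuEq1}.
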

\begin{proof}
Using $i,j=2$ in \eqref{MHEq2}, we have
$$G_{22}^\mu(z)=G_{22}(z)-\mu G_{21}(z)G_{12}(z)+\mu^2 G_{21}(z)G_{11}^\mu(z)G_{12}(z)$$
Since we are working with $E\in S$, the limits for $G_{ij}(E+\iota 0)$ exists for $i,j=1,2$. For $\phi,\psi\in P_2\Hi$ we have
\begin{align*}
\dprod{\psi}{G_{22}^\mu(E+\iota\epsilon)\phi}&=\dprod{\psi}{G_{22}(E+\iota\epsilon)\phi}-\mu \dprod{\psi}{G_{21}(E+\iota\epsilon)G_{12}(E+\iota\epsilon)\phi}\\
&\qquad+\mu^2 \dprod{\psi}{G_{21}(E+\iota\epsilon)G_{11}^\mu(E+\iota\epsilon)G_{12}(E+\iota\epsilon)\phi}\\
\lim_{\epsilon\downarrow0}f_E(\epsilon)\dprod{\psi}{G_{22}^\mu(E+\iota\epsilon)\phi}&=\mu^2\lim_{\epsilon\downarrow 0}f_E(\epsilon)\dprod{\psi}{G_{21}(E+\iota\epsilon)G_{11}^\mu(E+\iota\epsilon)G_{12}(E+\iota\epsilon)\phi}\\
&=\mu^2 \dprod{\psi}{G_{21}(E+\iota0)\left(\lim_{\epsilon\downarrow0}f_E(\epsilon) G_{11}^\mu(E+\iota\epsilon)\right)G_{12}(E+\iota 0)\phi}
\end{align*}
And now using \eqref{RkNuEq1} and \eqref{eq4} we have
\begin{align*}
& \dprod{\psi}{G_{21}(E+\iota0)\left(\lim_{\epsilon\downarrow0}f_E(\epsilon) G_{11}^\mu(E+\iota\epsilon)\right)G_{12}(E+\iota 0)\phi}\\
&\qquad= \dprod{\psi}{G_{12}(E+\iota0)^\ast\left(\lim_{\epsilon\downarrow0}f_E(\epsilon) G_{11}^\mu(E+\iota\epsilon)\right)G_{12}(E+\iota 0)\phi}
\end{align*}
From above if $\phi= G_{12}(E+\iota 0)^{-1} v$ for $v\in \tilde{V}_{E,1}^\mu$, then $\phi\in \tilde{V}^\mu_{E,2}$, giving us that the map $G_{12}(E+\iota 0)^{-1}$ is injection.

Finally
$$\lim_{\epsilon\downarrow 0}\frac{tr\left(G^\mu_{22}(E+\iota\epsilon)\right)}{tr\left(G^\mu_{11}(E+\iota\epsilon)\right)}=tr\left(G_{12}(E+\iota0)^\ast\left(\lim_{\epsilon\downarrow0}f_E(\epsilon) G_{11}^\mu(E+\iota\epsilon)\right)G_{12}(E+\iota 0)\right)$$
where RHS is $L^1$ for $\sigma^\mu_1$-singular by lemma \ref{PolThm} (Poltoratskii's theorem).
\end{proof}
\begin{cor}\label{SingLem3}
Let $A$ be self adjoint operator on Hilbert space $\Hi$, and $P_1,P_2$ are two rank $N$ projections. Set $A_\mu=A+\mu_1 P_1+\mu_2 P_2$, $G_{ij}(z)=P_i(A-z)^{-1}P_j$ and $G^{\mu_1,\mu_2}_{ij}(z)=P_i(A_{\mu_1,\mu_2}-z)^{-1}P_j$ for $i,j=1,2$. Let $E\in S_{12}\cap S_{21}$ (defined as in \eqref{WorkSet2}) and $tr(G_{ii}^{\mu_1,\mu_2}(E+\iota\epsilon))^{-1}\xrightarrow{\epsilon\downarrow0}0$ for either $i=1,2$, then
$$\tilde{V}_{E,i}^{\mu_1,\mu_2}=ker(\lim_{\epsilon\downarrow 0}tr(G_{ii}^{\mu_1,\mu_2}(E+\iota\epsilon))^{-1}G_{ii}^{\mu_1,\mu_2}(E+\iota\epsilon))^\bot\qquad i=1,2$$
are isomorphic. In particular the singular part of trace measure associated with $G_{ii}^{\mu_1,\mu_2}$ are equivalent to each other.
\end{cor}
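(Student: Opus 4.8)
The plan is to replay the proof of Corollary \ref{AbsEqu1}, replacing the absolutely continuous input Lemma \ref{AbsPtSpec1} by its singular counterpart Lemma \ref{SingPtSpec1}. That lemma compares the blocks $G_{11}^\mu,G_{22}^\mu$ of a single rank-$N$ perturbation $A_\mu=A+\mu P_1$; to reach the two-parameter operator $A_{\mu_1,\mu_2}=A+\mu_1P_1+\mu_2P_2$ I would regard it as a one-parameter perturbation of a shifted base point in each of the two possible ways. Writing $B_1=A+\mu_1P_1$ and $A_{\mu_1,\mu_2}=B_1+\mu_2P_2$, Lemma \ref{SingPtSpec1} applied with the labels $1,2$ interchanged and base operator $B_1$ produces the injection
$$\bigl(G_{21}^{\mu_1,0}(E+\iota 0)\bigr)^{-1}\colon \tilde{V}_{E,2}^{\mu_1,\mu_2}\hookrightarrow \tilde{V}_{E,1}^{\mu_1,\mu_2};$$
symmetrically, with $B_2=A+\mu_2P_2$ and $A_{\mu_1,\mu_2}=B_2+\mu_1P_1$ one gets $\bigl(G_{12}^{0,\mu_2}(E+\iota 0)\bigr)^{-1}\colon \tilde{V}_{E,1}^{\mu_1,\mu_2}\hookrightarrow \tilde{V}_{E,2}^{\mu_1,\mu_2}$. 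Since both spaces are finite dimensional, injections in both directions force equality of dimensions, hence the claimed isomorphism.

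For either injection to be meaningful I need the relevant off-diagonal block to remain invertible after the \emph{first} (shifting) perturbation, and this is exactly the role of Lemma \ref{InvLem2}: the hypotheses that $S_{12},S_{21}$ have full measure say $G_{12}(E+\iota 0)$ and $G_{21}(E+\iota 0)$ are invertible for a.e.\ $E$, and Lemma \ref{InvLem2} upgrades these to invertibility of $G_{21}^{\mu_1,0}(E+\iota 0)$ for a.e.\ $(E,\mu_1)$ and of $G_{12}^{0,\mu_2}(E+\iota 0)$ for a.e.\ $(E,\mu_2)$; intersecting the finitely many full-measure sets leaves everything valid for a.e.\ $(E,\mu_1,\mu_2)$. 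The measure statement then follows from the concluding assertion of Lemma \ref{SingPtSpec1}: the first application gives that the singular part of one trace measure is absolutely continuous with respect to the singular part of the other, the symmetric application gives the reverse domination, and the two combine into equivalence of the singular parts of $\sigma_1^{\mu_1,\mu_2}$ and $\sigma_2^{\mu_1,\mu_2}$ (their absolutely continuous parts being handled separately by Corollary \ref{AbsEqu1}).

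The point requiring the most care is that the hypothesis only asks $tr(G_{ii}^{\mu_1,\mu_2}(E+\iota\epsilon))^{-1}\to0$ for \emph{either} index, whereas the two applications are normalized by $tr(G_{11}^{\mu_1,\mu_2})^{-1}$ and $tr(G_{22}^{\mu_1,\mu_2})^{-1}$ respectively and each a priori demands the blow-up of the matching block. I would remove this asymmetry by showing the two normalizations diverge together on $S_{12}\cap S_{21}$. Expanding as in Lemma \ref{SingPtSpec1},
$$G_{22}^{\mu_1,\mu_2}=G_{22}^{0,\mu_2}-\mu_1 G_{21}^{0,\mu_2}G_{12}^{0,\mu_2}+\mu_1^2\, G_{21}^{0,\mu_2}G_{11}^{\mu_1,\mu_2}G_{12}^{0,\mu_2},$$
the first two terms have finite boundary limits, so only the last can diverge. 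Setting $K=\lim_{\epsilon\downarrow0}tr(G_{11}^{\mu_1,\mu_2})^{-1}G_{11}^{\mu_1,\mu_2}$, which is positive semidefinite with unit trace and range inside $ker(\Im G_{11}^{0,\mu_2}(E+\iota0))$, the boundary symmetry already exploited in Lemma \ref{SingPtSpec1} lets me replace $G_{21}^{0,\mu_2}K$ by $(G_{12}^{0,\mu_2})^\ast K$, so that $\lim_{\epsilon\downarrow0}tr(G_{11}^{\mu_1,\mu_2})^{-1}G_{22}^{\mu_1,\mu_2}=\mu_1^2\,(G_{12}^{0,\mu_2})^\ast K\, G_{12}^{0,\mu_2}\ge 0$ has strictly positive trace because $G_{12}^{0,\mu_2}$ is invertible and $K\neq0$. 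Hence $tr(G_{22}^{\mu_1,\mu_2})\to\infty$ whenever $tr(G_{11}^{\mu_1,\mu_2})\to\infty$ and symmetrically, so the single hypothesis forces both normalizations to blow up on a common carrier of the singular support, both applications run, and the two one-sided dominations become equivalence. The genuine obstacle is exactly this strict positivity of the limiting trace: the conjugating blocks are not self-adjoint at the boundary, and it is only their restriction to $ker(\Im G_{11})$ that is, which is precisely what makes the non-vanishing go through.
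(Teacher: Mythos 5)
Your proposal follows essentially the same route as the paper: two applications of Lemma \ref{SingPtSpec1} with shifted base points $A+\mu_1P_1$ and $A+\mu_2P_2$, Lemma \ref{InvLem2} to keep the off-diagonal blocks $G_{21}^{\mu_1,0}$ and $G_{12}^{0,\mu_2}$ invertible, and finite dimensionality to turn the two injections into an isomorphism. Your explicit check that the two trace normalizations diverge together --- via the strict positivity of $tr\bigl((G_{12}^{0,\mu_2})^\ast K\,G_{12}^{0,\mu_2}\bigr)$ for the positive semidefinite, unit-trace limit $K$ --- is correct and in fact makes precise a step the paper only covers implicitly through the existence of the Poltoratskii trace-ratio limits used to identify its auxiliary spaces $\tilde{V}_{E,i,j}^{\mu_1,\mu_2}$ with $\tilde{V}_{E,i}^{\mu_1,\mu_2}$.
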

\begin{proof}
Define
$$\tilde{V}_{E,i,j}^{\mu_1,\mu_2}=ker(\lim_{\epsilon\downarrow 0}tr(G_{jj}^{\mu_1,\mu_2}(E+\iota\epsilon))^{-1}G_{ii}^{\mu_1,\mu_2}(E+\iota\epsilon))^\bot$$
This is exactly like corollary \ref{AbsEqu1}. By action of lemma \ref{SingPtSpec1} we have
$$V^{\mu_1,\mu_2}_{E,1,1}\hookrightarrow V^{\mu_1,\mu_2}_{E,2,1}\ \&\ V^{\mu_1,\mu_2}_{E,2,2}\hookrightarrow V^{\mu_1,\mu_2}_{E,1,2}$$
where first is given by $G_{12}^{0,\mu_2}(E+\iota 0)^{-1}$ and second is given by $G_{21}^{\mu_1,0}(E+\iota 0)^{-1}$ which are a.e (with respect to perturbation $\mu_1,\mu_2$) invertible because of lemma \ref{InvLem2}. Because of the second conclusion of the previous lemma \ref{SingPtSpec1} we have 
$$\lim_{\epsilon\downarrow 0}\frac{tr\left(G^\mu_{11}(E+\iota\epsilon)\right)}{tr\left(G^\mu_{22}(E+\iota\epsilon)\right)}\ \ \text{exists for a.e $tr(P_2 E_{A_\mu}(\cdot)P_2)$-singular},$$
$$\lim_{\epsilon\downarrow 0}\frac{tr\left(G^\mu_{22}(E+\iota\epsilon)\right)}{tr\left(G^\mu_{11}(E+\iota\epsilon)\right)}\ \ \text{exists for a.e $tr(P_1 E_{A_\mu}(\cdot)P_1)$-singular}.$$
So as a vector space $V_{E,i,j}^{\mu_1,\mu_2}=V_{E,i,i}^{\mu_1,\mu_2}=V^{\mu_1,\mu_2}_{E,i}$ for a.e $tr(P_i E_{A_\mu}(\cdot)P_i)$-singular.
Since we have injection both direction and finite dimensionality of the spaces involved, we get the isomorphism.
\end{proof}
%%%%%%%%%%%%%%%%%%%%%%%%%%%%%%%%%%%%%%%%%%%%%%%%%%%%%%%%%%%%%%%%%%%%%%%%%%%%%%%%%%%%%%%%%%%%%%%%%%%%%%%%%%%%%%%%%%%%%%%%%%%%%%%%%%%%%%%%%%%%%%%%%%%%%%%%%%%%%%%%%%%%%%%%%%
\subsection{Proof of Main theorem}
\begin{proof}
The notation we will use is
$$G^\omega_{nm}(z)=P_n(H^\omega-z)^{-1}P_m\qquad\forall n,m\in\mathcal{N}$$
and for some $p\in\mathscr{M}$ we will denote
$$H^{\omega}_{\mu,p}=H^\omega+\mu P_p$$
and
$$G^{\omega,\mu,p}_{nm}(z)=P_n(H^\omega_{\mu,p}-z)^{-1}P_m\qquad\forall n,m\in\mathscr{M}$$
\begin{enumerate}
\item For $n,m\in\mathscr{M}$, let $\omega\in E_{n,m}$, using lemma \ref{InvLem1} we get $G^\omega_{nm}(z)$ is almost surely invertible. For any $p\in\mathcal{N}$, we have $H^\omega_{\mu,p}$, and using lemma \ref{InvLem2} we get $G^{\omega,\mu,p}_{nm}(z)$ is also almost surely invertible for almost all $\mu$. So we get, if $\omega\in E_{n,m}$, then $\tilde{\omega}\in E_{n,m}$ ($\tilde{\omega}$ is defined by $\omega_n=\tilde{\omega}_n\ \forall n\in\mathscr{M}\setminus\{p\}$) or in other words, $E_{n,m}$ is independent of the $\omega_p$ for any $p\in\mathscr{M}$. We can repeat the procedure and show that $E_{n,m}$ is independent of $\{\omega_{p_i}\}_{i=1}^K$ for $p_i\in\mathscr{M}$ . So we can use Kolmogorov $0$-$1$ law to conclude that $\mathbb{P}(E_{n,m})\in\{0,1\}$.
%need to write the part with Kolmogorov 0-1 law
\item For any $n\in\mathscr{M}$, we have $(H^\omega,\Hi_{n,\omega})$ is unitary equivalent to $(M_{id},L^2(\RR,\Sigma^\omega_n,\CC^N))$ (see theorem \ref{SpecThm}). For $m\in\mathscr{M}$ such that $\mathbb{P}(E_{n,m}\cap E_{m,n})=1$, we have to show $(\Sigma^\omega_n)_{ac}$ is equivalent to $(\Sigma^\omega_m)_{ac}$. Using $(5)$ of theorem \ref{FgetThm1} we have
$$d(\Sigma^\omega_n)_{ac}(E)=\frac{1}{\pi}\Im G_{nn}^\omega(E+\iota 0)dE$$
For $\omega\in E_{n,m}$, we can write the operator $H^{\tilde{\omega}}=H^\omega+\mu_1 P_n+\mu_2 P_m$, and using corollary \ref{AbsEqu1} we get $V_n^{{\tilde{\omega}}}$ are isomorphic to $V_m^{{\tilde{\omega}}}$, where 
$$V_i^{{\tilde{\omega}}}=ker\left(P_i(H^{\tilde{\omega}}-E-\iota 0)^{-1}P_i\right)^\bot$$
Since $\Im G^\omega_{nn}(E+\iota 0)=\Im\left(P_n(H^\omega-E-\iota0)^{-1}P_n\right)$, the isomorphism gives the equivalence. By proof of part $(1)$, we know $E_{n,m}$ is independent of $\omega_n$ and $\omega_m$, so the result holds for a.e $\omega$.
\item For $n,m\in\mathscr{M}$ such that $\mathbb{P}(E_{n,m}\cap E_{m,n})=1$. Let $\omega\in E_{n,m}$, define $H^{\tilde{\omega}}=H^\omega+\mu_n P_n+\mu_m P_m$ (almost always $\tilde{\omega}\in E_{n,m}$), then corollary \ref{SingLem3}  gives the equivalence of the trace measure for singular part. As for absolute continuous part, second part of the theorem gives the equivalence.
%Again like previous part $\tilde{\omega}\in E_{n,m}$
\end{enumerate}

\end{proof}
%%%%%%%%%%%%%%%%%%%%%%%%%%%%%%%%%%%%%%%%%%%%%%%%%%%%%%%%%%%%%%%%%%%%%%%%%%%%%%%%%%%%%%%%%%%%%%%%%%%%%%%%%%%%%%%%%%%%%%%%%%%%%%%%%%%%%%%%%%%%%%%%%%%%%%%%%%%%%%%%%%%%%%%%%%%%%%

\section*{Acknowledgement}
I would like to thank M. Krishna for discussions and helpful suggestions. This work is partially supported by IMSc Project 12-R\&D-IMS-5.01-0106.
%%%%%%%%%%%%%%%%%%%%%%%%%%%%%%%%%%%%%%%%%%%%%%%%%%%%%%%%%%%%%%%%%%%%%%%%%%%%%%%%%%%%%%%%%%%%%%%%%%%%%%%%%%%%%%%%%%%%%%%%%%%%%%%%%%%%%%%%%%%%%%%%%%%%%%%%%%%%%%%%%%%%%%%%%%%%%%%%%%%%%%%%%%%%%%%%%%%%%%%%%%%%%%%%%%%%%%%%%%%%%
\appendix\section{Appendix}
%\subsection{Properties Related to matrices}
For $A\in M_n(\CC)$, we have the decomposition $A=\Re A+\iota \Im A$, where both $\Re A$ and $\Im A$ are self adjoint. For $A_{11},A_{12},A_{21},A_{22}\in M_n(\CC)$ such that $A_{11}^\ast A_{22}-A_{21}^\ast A_{12}=I$ and $A_{21}^\ast A_{11}=A_{11}^\ast A_{21}$, $ A_{22}^\ast A_{12}=A_{12}^\ast A_{22}$. Define
$$\tilde{A}=(A_{11}-A_{12}A)(A_{21}-A_{22}A)^{-1}$$
then we have
\begin{align}\label{eq2}
\Im\tilde{A}&=\frac{1}{2\iota}(\tilde{A}-\tilde{A}^\ast)\nonumber\\
&=\frac{1}{2\iota}\left((A_{11}-A_{12}A)(A_{21}-A_{22}A)^{-1}-\left((A_{21}-A_{22}A)^{-1}\right)^\ast(A_{11}-A_{12}A)^\ast\right)\nonumber\\
&=\left((A_{21}-A_{22}A)^{-1}\right)^\ast\Im A\left((A_{21}-A_{22}A)^{-1}\right)
\end{align}
This is equivalent to mobius transforms for complex numbers. These kind of transform will play important role for determining $\Sigma_n^\omega$.

Following lemma gives some of the properties of the off-diagonal terms of Herglotz matrices.
\begin{lemma}\label{PP1}
For $A_{ij}\in M_n(\CC)\ i,j=1,2$, define
$$A=\bkt{\begin{matrix} A_{11} & A_{12} \\ A_{21} & A_{22}\end{matrix}}$$
with the property that $\Im A\geq 0$. Then for $u,v\in\CC^n$
\begin{equation}\label{eq3}
\left|\dprod{u}{\frac{A_{12}-A_{21}^\ast}{2\iota}v}\right|^2\leq 2\dprod{u}{(\Im A_{11})u}\dprod{v}{(\Im A_{22})v}
\end{equation}
\end{lemma}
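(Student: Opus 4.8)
The plan is to recognize $\frac{A_{12}-A_{21}^\ast}{2\iota}$ as the off-diagonal block of the positive semidefinite matrix $\Im A$ and to conclude by Cauchy--Schwarz for the sesquilinear form that $\Im A$ defines. First I would write $\Im A=\frac{1}{2\iota}(A-A^\ast)$ and compute its $2\times 2$ block structure. Since
\[
A^\ast=\bkt{\begin{matrix} A_{11}^\ast & A_{21}^\ast\\ A_{12}^\ast & A_{22}^\ast\end{matrix}},
\]
a direct calculation gives
\[
\Im A=\bkt{\begin{matrix} \Im A_{11} & M\\ M^\ast & \Im A_{22}\end{matrix}},\qquad M:=\frac{A_{12}-A_{21}^\ast}{2\iota},
\]
so that the quantity on the left of \eqref{eq3} is exactly $\dprod{u}{Mv}$. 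Restricting the form $\dprod{\cdot}{(\Im A)\cdot}$ to vectors of the form $\binom{u}{0}$ and $\binom{0}{v}$ also shows $\Im A_{11}\geq 0$ and $\Im A_{22}\geq 0$, so the right-hand side of \eqref{eq3} is meaningful and nonnegative.

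The key step is a single application of Cauchy--Schwarz. Because $\Im A\geq 0$, the expression $\langle x,y\rangle_\ast:=\dprod{x}{(\Im A)y}$ is a positive semidefinite Hermitian form on $\CC^{2n}$, hence $|\langle x,y\rangle_\ast|^2\leq \langle x,x\rangle_\ast\,\langle y,y\rangle_\ast$. Taking $x=\binom{u}{0}$ and $y=\binom{0}{v}$ and using $\Im A\,\binom{0}{v}=\binom{Mv}{(\Im A_{22})v}$, the cross term is $\langle x,y\rangle_\ast=\dprod{u}{Mv}$ while the diagonal terms are $\langle x,x\rangle_\ast=\dprod{u}{(\Im A_{11})u}$ and $\langle y,y\rangle_\ast=\dprod{v}{(\Im A_{22})v}$. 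This yields
\[
\left|\dprod{u}{Mv}\right|^2\leq \dprod{u}{(\Im A_{11})u}\,\dprod{v}{(\Im A_{22})v},
\]
from which \eqref{eq3} follows with room to spare.

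There is no real obstacle; the only care needed is the bookkeeping of the blocks of $\Im A$ and the observation that its $(1,2)$ entry is precisely $M$. I would remark that this route actually produces the sharper constant $1$ rather than $2$. The factor $2$ in the statement is what one obtains from the slightly cruder but equally natural argument of testing $\Im A\geq 0$ against the real one-parameter families $\binom{u}{tv}$ and $\binom{u}{\iota t v}$ with $t\in\RR$: optimizing the two resulting quadratics in $t$ bounds $(\Re\dprod{u}{Mv})^2$ and $(\Im\dprod{u}{Mv})^2$ each by $\dprod{u}{(\Im A_{11})u}\,\dprod{v}{(\Im A_{22})v}$, and adding them gives exactly the factor $2$. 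Either form of the argument suffices for the later use of the lemma.
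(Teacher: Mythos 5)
Your proof is correct and is exactly the argument the paper leaves implicit (it gives no proof, only the remark that the lemma ``follows same steps as the $2\times2$ matrix case''): the block computation $\Im A=\bkt{\begin{smallmatrix}\Im A_{11} & M\\ M^\ast & \Im A_{22}\end{smallmatrix}}$ with $M=\frac{A_{12}-A_{21}^\ast}{2\iota}$ is right, and Cauchy--Schwarz for the positive semidefinite form $\dprod{x}{(\Im A)y}$ applied to $\binom{u}{0}$ and $\binom{0}{v}$ gives the inequality. Your observation that this yields the sharper constant $1$ is also correct --- the stated factor $2$ is what the cruder real-parameter quadratic argument produces, and the lemma as stated follows a fortiori --- and nothing in the paper's later use of \eqref{eq3} (namely \eqref{eq4} and \eqref{eq5}) depends on which constant one takes.
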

% \begin{proof}
% For $u,v\in\CC^n$ we have
% \begin{align*}
% \hspace{-1cm}0\leq \dprod{\bkt{\begin{matrix}u \\ v\end{matrix}}}{\bkt{\begin{matrix}\Im A_{11} & \frac{A_{12}-A_{21}^\ast}{2\iota}\\ \frac{A_{21}-A_{12}^\ast}{2\iota} & \Im A_{22}\end{matrix}}\bkt{\begin{matrix}u \\ v\end{matrix}}}=\dprod{u}{(\Im A_{11})u}+\dprod{v}{(\Im A_{22})v}+2\Re\dprod{u}{\frac{A_{12}-A_{21}^\ast}{2\iota}v}
% \end{align*}
% rescaling as $u\mapsto \lambda u$ and $v\mapsto \mu v$ for $\mu,\nu\in\RR$, the positivity gives us
% $$\left|\Re\dprod{u}{\frac{A_{12}-A_{21}^\ast}{2\iota}v}\right|^2\leq \dprod{u}{(\Im A_{11})u}\dprod{v}{(\Im A_{22})v}$$
% and $u\mapsto \iota\lambda u$ and $v\mapsto \mu v$ for $\mu,\nu\in\RR$, the positivity gives us
% $$\left|\Im\dprod{u}{\frac{A_{12}-A_{21}^\ast}{2\iota}v}\right|^2\leq \dprod{u}{(\Im A_{11})u}\dprod{v}{(\Im A_{22})v}$$
% so combining both of them gives us \eqref{eq3}.
% \end{proof}
 It's proof follows same steps as $2\times2$ matrix case. As a consequence of \eqref{eq3}, we have
\begin{equation}\label{eq4}
(\Im A_{22})v=0\ \Rightarrow\ A_{12}v=A_{21}^\ast v, \& \ (\Im A_{11})u=0\ \Rightarrow\ A_{21}u=A_{12}^\ast u
\end{equation}
also
\begin{equation}\label{eq5}
\Im tr(A_{22})=0\ \Rightarrow\ A_{12}=A_{21}^\ast, \& \ \Im tr(A_{11})=0\ \Rightarrow\ A_{21}=A_{12}^\ast
\end{equation}

We will use Matrix valued Herglotz function to work, following theorem from \cite[Theorem 5.4]{GT1} list some of the useful properties. Proof of these statements are similar to that of scalar case.
\begin{thm}\label{FgetThm1}
Let $M:\CC^{+}\rightarrow M_n(\CC)$ be a matrix-valued Herglotz function, then
\begin{enumerate}
\item $M(z)$ has finite normal limits, i.e $M(E\pm\iota 0)=\lim_{\epsilon\rightarrow 0}M(E+\iota\epsilon)$ for a.e $E\in\RR$.
\item If for each diagonal element $M_{ii}(z),1\leq i\leq n$ of $M(z)$ has zero normal limit on a fixed subset of $\RR$ which has positive Lebesgue measure, then $M(z)=C_0$ where $C_0$ is a constant self-adjoint $n\times n$ matrix with vanishing diagonal elements.
\item There exists a matrix-valued measure $\Sigma$ on the bounded Borel subset of $\RR$ satisfying
$$\int \dprod{v}{d\Sigma(x)v}(1+x^2)^{-1}\qquad\forall v\in\CC^n$$
such that the Nevanlinna, respectively, Riesz-Herglotz representation
$$M(z)=C+Dz+\int_\RR d\Omega(x)\left((x-z)^{-1}-x(1+x^2)^{-1}\right)\qquad\forall z\in\CC^{+}$$
$$C=M(\iota),\ D=\lim_{\eta\uparrow\infty}\frac{1}{\iota\eta}M(\iota\eta)$$
holds.
\item The Stieltjes inversion formula for $\Sigma$ is
$$\frac{1}{\pi}\lim_{\epsilon\downarrow0}\int_{\lambda_1}^{\lambda_2}d\lambda \Im(M(\lambda+\iota\epsilon))=\frac{\Sigma(\{\lambda_1\})+\Sigma(\{\lambda_2\})}{2}+\Sigma((\lambda_1,\lambda_2))$$
\item The absolute continuous part of the measure is given by
$$d\Sigma_{ac}(\lambda)=\frac{1}{\pi}\Im(M(\lambda+\iota 0))d\lambda$$
\item Any poles of $M(z)$ are simple and are located on real axis. 
\end{enumerate}
\end{thm}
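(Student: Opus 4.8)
The plan is to reduce every assertion to the classical scalar theory of Herglotz functions by testing $M$ against vectors. For fixed $v\in\CC^n$ the function $m_v(z)=\dprod{v}{M(z)v}$ is analytic on $\CC^{+}$ and satisfies $\Im m_v(z)=\dprod{v}{(\Im M(z))v}\geq 0$, so it is a scalar Herglotz function to which all the familiar results (Fatou boundary values, Nevanlinna representation, Stieltjes inversion, Poltoratskii-free elementary facts) apply. The off-diagonal information is then recovered by polarization: the sesquilinear form $\dprod{u}{M(z)v}$ is a finite linear combination of the quadratic forms $m_w(z)$ with $w$ ranging over $\{u+\iota^k v:k=0,1,2,3\}$, so any pointwise statement established for every $m_v$ transfers to each matrix entry of $M$. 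This scalar-plus-polarization scheme handles parts (1), (3), (4) and (5) directly.

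For part (1) I would apply Fatou's theorem to each scalar $m_v$, giving the finite normal limit $m_v(E+\iota 0)$ for a.e.\ $E$; letting $v$ run over $e_i$, $e_i\pm e_j$ and $e_i\pm\iota e_j$ (countably many choices) and intersecting the full-measure sets, the polarization identity yields the existence and finiteness of every entry $M_{ij}(E+\iota 0)$ for a.e.\ $E$. For part (3) each $m_v$ carries a scalar Nevanlinna representation with a positive measure $\rho_v$ satisfying $\int(1+x^2)^{-1}\,d\rho_v<\infty$. I would define the matrix-valued measure $\Sigma$ by declaring $\dprod{v}{\Sigma(\cdot)v}=\rho_v$ and extending by polarization, then check that $\Sigma$ is well defined, countably additive and nonnegative (each $\rho_v\geq 0$); the constants are read off as $C=M(\iota)$ and $D=\lim_{\eta\uparrow\infty}(\iota\eta)^{-1}M(\iota\eta)$, these being independent of $v$ by uniqueness of the scalar representations.

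Parts (4) and (5) then follow as corollaries. Applying the scalar Stieltjes inversion formula to each $m_v$ recovers $\rho_v\big((\lambda_1,\lambda_2)\big)$ together with the half-weights at the endpoints, and polarization promotes this to the matrix identity of (4); statement (5) is its differentiated form, using the a.e.\ existence of boundary values from (1). For part (6) I would note that $M$ is analytic throughout $\CC^{+}$, so no pole can lie in the open upper half plane and every pole $x_0$ is real; near such a point the representation in (3) forces $M(z)=(x_0-z)^{-1}\Sigma(\{x_0\})+(\text{holomorphic})$, and since $\Sigma(\{x_0\})\geq 0$ is a finite nonnegative point mass the principal part is first order, whence the pole is simple with residue $-\Sigma(\{x_0\})\leq 0$.

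The delicate point, which I expect to be the main obstacle, is part (2), since the hypothesis constrains the full normal limit rather than merely its imaginary part. Fixing $i$, I would pass to the Cayley transform $\phi_i=(m_{e_i}-\iota)(m_{e_i}+\iota)^{-1}$, a bounded analytic self-map of $\DD$; the assumption $M_{ii}(E+\iota 0)=0$ on a positive-measure set $S$ gives $\phi_i(E+\iota 0)=-1$ on $S$. By the boundary uniqueness theorem for bounded analytic functions (the F.\ and M.\ Riesz circle of ideas already invoked in the introduction), a bounded analytic function equal to a constant on a positive-measure boundary subset is that constant, so $\phi_i\equiv -1$ and hence $M_{ii}\equiv 0$. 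Once all diagonal entries vanish identically, I would invoke positivity: a positive semidefinite matrix with zero diagonal has all entries zero, because for each pair $i\neq j$ the $2\times 2$ principal submatrix of $\Im M(z)$ indexed by $\{i,j\}$ has zero diagonal and nonnegative determinant $-|(\Im M(z))_{ij}|^2$, forcing $(\Im M(z))_{ij}=0$. Thus $\Im M(z)\equiv 0$, so $M$ is a self-adjoint-matrix-valued analytic function and therefore constant, equal to some $C_0=C_0^\ast$ whose diagonal vanishes by the identity $M_{ii}\equiv 0$.
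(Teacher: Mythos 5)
Your proposal is correct and takes essentially the approach the paper itself intends: the paper gives no proof of Theorem \ref{FgetThm1}, citing \cite[Theorem 5.4]{GT1} with the remark that the ``proof of these statements are similar to that of scalar case,'' and your scheme --- scalar Herglotz theory applied to the quadratic forms $\dprod{v}{M(z)v}$, recovery of the entries by polarization, the Cayley transform plus the Riesz--Privalov boundary uniqueness theorem for part (2), and positivity of $\Im M$ to annihilate the off-diagonal entries --- is precisely that scalar reduction carried out in full. The one blemish is inherited from the statement itself rather than your argument: in the Nevanlinna representation the constant is $C=\Re M(\iota)$, not $M(\iota)$, since at $z=\iota$ the integral term contributes the purely imaginary quantity $\iota\int(1+x^2)^{-1}\,d\Sigma(x)$.
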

%%%%%%%%%%%%%%%%%%%%%%%%%%%%%%%%%%%%%%%%%%%%%%%%%%%%%%%%%%%%%%%%%%%%%%%%%%%%%%%%%%%%%%%%%%%%%%%%%%%%%%
We will use the given version of spectral theorem.
\begin{thm}\label{SpecThm}
Let $A$ be a self adjoint operator on Hilbert space $\Hi$ and $P$ be an rank $N$ projection. Let the vector space $P\Hi$ has basis $\{\delta_n\}_{n=1}^N$ and define the cyclic subspace generated by $A$ and $\delta_n$ by $\Hi_n$ for all $n=1,\cdots,N$, and define the subspace
$$\Hi_P=\sum_{i=1}^N \Hi_{i}$$
also let $\Sigma_A$ denote the spectral projection of $A$, then
 $(L^2(\mathbb{R},P\Sigma_AP,P\Hi),id)$ and $(\Hi_P,A)$ are unitarily equivalent, where $id$ is multiplication by identity.
\end{thm}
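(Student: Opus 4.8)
The plan is to realize the unitary equivalence through the natural map that sends a $P\Hi$-valued function to the corresponding combination of spectral images of the basis vectors $\{\delta_n\}$. Write $\Sigma(\cdot):=P\Sigma_A(\cdot)P$ for the resulting positive operator-valued measure on $P\Hi$, with scalar entries $\mu_{ij}(\cdot)=\dprod{\delta_i}{\Sigma_A(\cdot)\delta_j}$ relative to the chosen basis, and let $\sigma:=tr(\Sigma)=\sum_i\mu_{ii}$ be its trace measure. Since $\Sigma_A(\RR)=I$ we have $\Sigma(\RR)=P$, so $\sigma$ is a finite measure with $\sigma(\RR)=N$; in particular the simple (hence bounded) $P\Hi$-valued functions are dense in $L^2(\RR,P\Sigma_AP,P\Hi)$. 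On a bounded function $f$ with coordinates $(f_1,\dots,f_N)$, so that $f(x)=\sum_j f_j(x)\delta_j$, I would define
$$W f=\sum_{j=1}^N f_j(A)\delta_j\in\Hi,$$
using the ordinary scalar Borel functional calculus for $A$; each summand lies in $\Hi_j$, hence $Wf\in\sum_j\Hi_j$.

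The first key step is to verify that $W$ is isometric. Using $\dprod{f_i(A)\delta_i}{f_j(A)\delta_j}=\int\overline{f_i(x)}f_j(x)\,d\mu_{ij}(x)$, a direct consequence of the functional calculus applied to the pair $\delta_i,\delta_j$, bilinearity gives
$$\norm{Wf}_\Hi^2=\sum_{i,j}\int\overline{f_i}f_j\,d\mu_{ij}=\int\dprod{f(x)}{d\Sigma(x)\,f(x)}=\norm{f}_{L^2(\Sigma)}^2.$$
This simultaneously shows that $W$ is well defined on $\Sigma$-equivalence classes (two bounded functions agreeing $\Sigma$-a.e.\ have the same image) and that it is isometric, so it extends by density to an isometry $W:L^2(\RR,P\Sigma_AP,P\Hi)\to\Hi$, whose range is therefore closed.

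For surjectivity onto $\Hi_P$, taking $f$ supported in a single coordinate shows $g(A)\delta_j\in\mathrm{Ran}(W)$ for every bounded Borel $g$, and the closed span of $\{g(A)\delta_j\}$ over such $g$ is exactly the cyclic subspace $\Hi_j$. Hence $\mathrm{Ran}(W)\supseteq\Hi_j$ for each $j$ and, being closed, $\mathrm{Ran}(W)\supseteq\overline{\sum_j\Hi_j}=\Hi_P$; the reverse inclusion is immediate, so $W$ is unitary onto $\Hi_P$. The intertwining is then clear on bounded functions: since $\bigl(x f_j(x)\bigr)\big|_{x=A}=A\,f_j(A)$,
$$W(\mathrm{id}\cdot f)=\sum_j A f_j(A)\delta_j=A\,Wf,$$
so $W$ conjugates multiplication by $\mathrm{id}$ to $A$. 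I would upgrade this to unbounded $A$ in the standard way, checking that $W$ carries the maximal domain $\{f\in L^2(\Sigma):\int x^2\,\dprod{f(x)}{d\Sigma(x)f(x)}<\infty\}$ of $M_{\mathrm{id}}$ onto the domain of $A|_{\Hi_P}$ and intertwines there, which identifies $(\Hi_P,A)$ with $(L^2(\RR,P\Sigma_AP,P\Hi),M_{\mathrm{id}})$.

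The part demanding the most care is the matrix-valued $L^2$ space itself. The subspaces $\Hi_i$ are in general non-orthogonal, so one cannot simply splice together scalar spectral representations, and the measure $\Sigma$ may be degenerate, with density relative to $\sigma$ rank-deficient on a set of positive $\sigma$-measure. The isometry identity above is exactly what absorbs both difficulties: the off-diagonal correlations $\mu_{ij}$ encode the overlaps of the $\Hi_i$, and degeneracy is automatically quotiented out since it contributes nothing to $\norm{f}_{L^2(\Sigma)}$. The price is that $W$ must be defined first on bounded functions, where each $f_j(A)\delta_j$ is unambiguously a vector in $\Hi$, rather than on arbitrary $L^2(\Sigma)$ representatives, and finiteness of $\sigma$ must be invoked for the density of simple functions and for the extension argument.
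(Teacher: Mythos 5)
Your proposal is correct and follows essentially the same route as the paper: the same map $f\mapsto\sum_j f_j(A)\delta_j$, the same isometry computation via $\dprod{f_i(A)\delta_i}{f_j(A)\delta_j}=\int\overline{f_i}f_j\,d\mu_{ij}$, and the same intertwining identity. Your version is in fact more careful than the paper's (defining the map on bounded functions first, extending by density, and arguing surjectivity via closedness of the range rather than the paper's brief appeal to the definition of $\Hi_P$), but the underlying argument is identical.
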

\begin{proof}
We have a basis of $P\Hi$ given by $\{\delta_n\}_{n=1}^N$. So define the map
$$U:L^2(\mathbb{R},P\Sigma_AP,\CC^N)\rightarrow \Hi_P$$
as
$$U(f_1,\cdots,f_N)\mapsto\sum_{i=1}^N f_i(A)\delta_i$$
the map is injection because
\begin{align*}
0&=\norm{U(f_1,\cdots,f_n)}_2^2=\sum_{i,j=1}^N\dprod{f_i(A)\delta_i}{f_j(A)\delta_j}\\
&=\sum_{i,j=1}^N \int \bar{f_i}(x)f_j(x)d\mu_{ij}(x)
\end{align*}
where $\mu_{ij}(\cdot)$ is the measure $\dprod{\delta_i}{\Sigma_A(\cdot)\delta_j}$, so the last equation tells us
$$\int\dprod{f(x)}{dP\Sigma_AP(x)f(x)}=0$$
where $f(x)=(f_1(x),\cdots,f_N(x))$, and $(P\Sigma_A(\cdot)P)_{ij}=\dprod{\delta_i}{\Sigma_A(\cdot)\delta_j}$. So the map $U$ is injection. The map is surjection because for $\phi\in\Hi_P$ by definition we can find $\{f_i\}_{i=1}^N$ such that $f_i\in L^2(\mathbb{R},\mu_{ii},\CC)$ such that $\phi=\sum_{i=1}^Nf_i(A)\delta_i$, and so $(f_1,\cdots,f_N)$ maps to $\phi$.
Finally we have to show $UM=AU$, by definition $(Mf)(x)=xf(x)$
\begin{align*}
U(Mf)&=\sum_{i=1}^N (Af_i(A))\delta_i\\
&=A\sum_{i=1}^N f_i(A)\delta_i=A(Uf)
\end{align*}
completing the proof.

\end{proof}
%%%%%%%%%%%%%%%%%%%%%%%%%%%%%%%%%%%%%%%%%%%%%%%%%%%%%%%%%%%%%%%%%%%%%%%%%%%%%%%%%%%%%%%%%%%%%%%%%%%%%%%%%%%%%%%%%%%%%%%%%%%%%%%%%%%%%%%%%%%%%%%%%%%%%%%%%%%%%%%%%%%%%%%%%%%%%%%%%%%%%%%%%%%%%%%%%%%%%%%%%%%%%%%%%%%%%%%%%%%%%%%%%%%%%%%%%%
%\nocite{*}
%\clearpage
\bibliographystyle{plain}
%\bibliography{reference1}

\end{document}